\newcommand{\initOneLiners}{%
    \setlength{\itemsep}{0pt}
    \setlength{\parsep }{0pt}
    \setlength{\topsep }{0pt}
%      \usecounter{myLISTctr}
}
\renewcommand{\ss}{\varsigma}
\newcommand{\TS}{\texttt{TreeSearch}}
\newcommand{\TM}{\texttt{TreeMatch}}
\newcommand{\Ac}{\mathcal{A}}
\newcommand{\Dc}{\mathcal{D}}
\newcommand{\Pc}{\mathcal{P}}
\newcommand{\Xc}{\mathcal{X}}
\newcommand{\Yc}{\mathcal{Y}}
\newcommand{\R}{\mathbb{R}}
\newcommand{\HTB}{\texttt{GroveBuild}}
\newcommand{\HTM}{\texttt{GroveMatch}}
\DeclareMathOperator*{\argmin}{arg\,min}
\newcommand{\Prob}[1]{\textbf{Pr}\left[{#1}\right]}
\newcommand{\Exp}[1]{\textbf{E}\left[{#1}\right]}
\renewcommand{\P}{\mathcal{P}}
\newcommand{\A}{\mathcal{A}}
\newcommand{\B}{\mathcal{B}}
\newtheorem{theorem}{Theorem}
\newtheorem{lemma}[theorem]{Lemma}
\newtheorem{observation}[theorem]{Observation}
\newtheorem{definition}[theorem]{Definition}
\newtheorem{corollary}[theorem]{Corollary}
\title{Competitively Pricing Parking in a Tree}
\begin{document}

\author{Max Bender \\ Computer Science Department\\University of Pittsburgh\\\texttt{MCB121@pitt.edu}\and Jacob Gilbert\\ Computer Science Department\\University of Pittsburgh\\\texttt{JMG264@pitt.edu} \and Aditya Krishnan\thanks{This work was done in part while this author was a student at Carnegie Mellon University}\\ Computer Science Department\\Johns Hopkins University\\\texttt{aditya.krishnan94@gmail.com} \and Kirk Pruhs\thanks{Supported in part by NSF grants CCF-1421508 and CCF-1535755, and an IBM Faculty Award.}\\ Computer Science Department\\University of Pittsburgh\\\texttt{kirk@cs.pitt.edu}}

\maketitle

\begin{abstract}
Motivated by demand-responsive parking pricing systems
we consider posted-price algorithms for the online metrical matching problem and the online metrical searching problem
in a tree metric. Our main result is a poly-log competitive
posted-price algorithm for online metrical searching.
\end{abstract}

 \section{Introduction}
 
Since 2011 SFpark has been San Francisco's system for managing the availability of on-street parking~\cite{SFparkWikipedia,SFparkAccess,SFparkHomepage}.
The goal of the system is to reduce the time and fuel wasted by drivers searching for an open space.
The system monitors parking usages via sensors embedded in the pavement and distributes this information in real-time to drivers
via SFpark.org and phone apps. 
SFpark periodically adjusts parking meter pricing to manage demand, to lower prices in underutilized areas, and to raise prices in overutilized areas. 
Prices can range from a minimum of 25 cents to a maximum of 7 dollar per hour during normal hours with a 18 dollars per hour cap for special events such as baseball games or street fairs. 
Several other cities in the world have similar demand-responsive parking pricing
systems, for example Calgary has had the ParkPlus system since 2008~\cite{calgaryparkplus}.

The problem of centrally assigning drivers to parking spots to minimize time and  fuel usage is naturally modeled by the online metrical matching problem. 
The setting for online metrical matching consists of a collection of $k$ servers (the parking spots) located at various
locations within a metric space. The  algorithm
then sees an online sequence of requests over time
that arrive at various locations in the metric space (the drivers arriving to look for a parking spot). In response to a request, the online algorithm
must match the request (car) to some server (parking spot) that has
not been previously matched; Conceptually we interpret
this matching as the request (car) moving to  the
location of the matched server (parking spot).  The objective goal 
is to minimize the aggregate distance traveled by the 
requests (cars).

We also consider what we call the online metrical search problem, which is an important special case of the online metrical matching problem.  This is a promise problem in that the adversary is constrained to guarantee that there is an
optimal matching for which only one edge has positive cost. 
It is useful to conceptually think 
of online metrical search as the following parking problem: the setting consists of many parking spots at various locations
in a metric space and a single car that is initially parked at some location in the metric space.
Over time the parking spots are decommissioned one by one until only one parking spot is left in commission. If at any time the car is not parked at an in-commission parking spot, then the car must move to a parking spot that is still in commission. The objective is to minimize the aggregate distance traveled by the car. The optimal solution is to move the car directly to the  
last remaining parking spot.

The online
metrical search problem is a special case of the online
metrical matching problem because the parking spots can
be viewed as servers and the decommissioning of
a parking spot can be simulated by the arrival of a 
request at the location of that parking spot. So a lower bound on the
competitive ratio for the online metrical search problem for
a particular metric space also gives a lower bound
for the online metrical matching problem on the metric space. Conversely it seems that in terms of the optimal
competitive ratio, online metric search is no easier than
metric matching. In particular, there is no known example of a metric 
space where the optimal competitive ratio for online metrical matching is known to be significantly greater than the optimal competitive ratio for 
online metrical search on that metric space. 
For example on a line metric, the online metrical search
problem is better known as the  ``cow path problem'', and the optimal deterministic competitive ratio is known to be 9~\cite{Borodin}, while the best known lower bound on the deterministic
competitive ratio for online metrical matching
on a line metric is 9.001~\cite{FuchsHK05},
worse only by a minuscule factor.

In order to be implementable within the context of SFpark, online algorithms must be posted-price algorithms. In this setting, posted-price means that 
before each request arrives, the online algorithm
sets a price on each unused server (parking spot) without knowing the location where the next request will arrive. Furthermore,
each  request is assumed to be a selfish agent who moves
to the available server (parking spot) that minimizes the sum of 
the price of and distance to that server.  
The objective remains to minimize the aggregate distance traveled by the requests. So conceptually the objective of the parking pricing agency is minimizing social cost, not maximizing revenue. 

Research into posted-price algorithms for online metrical matching was initiated in \cite{CohenEFJ15} as part of a line of research to study the use of posted-price 
algorithms to minimize social cost in online optimization problems. 
%There are now a handful of papers the line of research initiated by \cite{CohenEFJ15}, which can be partitioned into those dealing
%with server problems \cite{CohenEFJ15} such as metrical matching, and those dealing with scheduling problems 
%\cite{FeldmanFR17,ImMPS17}.
As a posted-price algorithm is a valid online algorithm, one can not expect to obtain a better competitive ratio for posted-price algorithms than what is achievable
by online algorithms. So this research line has primarily focused on problems where
the optimal competitive ratio achievable by an online algorithm is (perhaps approximately) known
and seeks to determine whether a similar competitive ratio can be
(again perhaps approximately) achieved by a posted-price algorithm. The higher level goal is to determine the
increase in social cost that is necessitated by the restriction that an algorithm
has to use posted prices to incentivize selfish agents, instead of being able to mandate agent behavior. 

An $O(\log \Delta)$-competitive randomized posted-price algorithm for metric
matching on a line metric is given in  \cite{CohenEFJ15}
where $\Delta$ is  the ratio of the distance  between the furthest two servers and
the distance between the closest two servers. 
No $o(\log k)$-competitive (not necessarily posted-price) algorithm is known 
for online metric matching on a line metric. 
So arguably, on a line
metric there is a 
posted-price algorithm that is nearly as competitive
as the best known centralized
online algorithm. 

Our original research goal was to determine whether 
posted-price algorithms can be similarly competitive
 with a centralized online algorithm for tree metrics for online metrical matching. In order to be more specific about our goal,
 we need to review a bit. A tree metric is represented by a  tree  $T = (V, E)$ with positive real edge weights where the distance $d_T(u, v)$ between vertices $u, v \in V$ is  the shortest path between vertices $u$ and $v$ in $T$.
 There is a deterministic online algorithm that is $(2k-1)$-competitive for online metric matching in
any metric space, and no deterministic online algorithm can achieve a better competitive ratio for online metric searching in a tree metric~\cite{KalyanasundaramP93,KhullerMV94}.
An $O(\log k)$-competitive randomized algorithm for
online metric matching in $O(\log k)$-HST's (Hierarchically Separated Trees) is given in \cite{MeyersonNP06}. By combining
this result with results about randomly embedding metric spaces into HST's \cite{Bartal96,Bartal98,FakcharoenpholRT04},
\cite{MeyersonNP06} obtained an $O(\log^3 k)$-competitive randomized algorithm for online metric matching in a general metric space. 
Following this general approach, \cite{BansalBGN14} later obtained an $O(\log^2 k)$-competitive
randomized  algorithm for online metrical search in an arbitrary metric by giving an $O(\log k)$-competitive randomized algorithm 
for $2$-HST's. No better results are known for tree
metrics, so all evidence points to tree metrics as being
as hard as general metrics for online metrical matching. Thus, more specifically
our original research goal was to determine whether there 
is poly-log competitive randomized posted-price
algorithm for the online metrical matching problem
on a tree metric. 
Before stating our progress toward this goal, it will be useful to review  the literature a bit more.

\subsection{Prior Related Work}
The most obvious algorithmic design approach for posted-price problems
is to directly design 
a pricing algorithm from scratch, as is done for metrical task systems in \cite{CohenEFJ15}, but this is not the most common approach in the literature. 
Two less direct  algorithmic design paradigms have emerged in the literature. The first algorithmic design paradigm is what we will call {\em mimicry}. A posted-price algorithm $A$ {\em mimics}
an online algorithm $B$ if the probability that $B$ will take a particular action is equal the the 
probability that a self-interested
agent will choose this same action when the prices of actions are set using $A$. For example,
\cite{CohenEFJ15}  shows how to set prices to mimic the $O(\log \Delta)$-competitive Harmonic algorithm
for online metric matching on a line metric from \cite{GuptaL12}. As another example,   \cite{FeldmanFR17} shows how to set prices to mimic the $O(1)$-competitive  algorithm Slow-Fit from \cite{Aspnes1997,AzarKPPW97} for the problem of minimizing makespan on related machines. 
However, for some problems it is not possible to mimic known  competitive algorithms using posted prices. For such problems, another algorithmic design paradigm is what we will 
call {\em monotonization}. In the monotonization algorithm design approach, one first seeks to characterize the online algorithms
that can be mimicked, and then designs such an online algorithm. In the  examples in the literature, this characterization 
involves some sort of monotonicity property. 
For example, monotonization is used in \cite{CohenEFJ15} to obtain an $O(k)$-competitive posted-price algorithm for the $k$-server problem on a line metric, and in \cite{CohenEFJ19}
to to obtain an $O(k)$-competitive posted-price algorithm for the $k$-server problem on a tree metric. 
Since no deterministic algorithm can be better than 
$k$-competitive for the $k$-server problem in any metric~\cite{k-server-lower}, 
this shows that in these settings, there is minimal increase
in social cost necessitated by the use of posted-prices.
As another example, monotonization is used in \cite{ImMPS17} to obtain an $O(1)$-competitive posted-price algorithm for minimizing maximum flow time on related machines.

For online metric matching on a line metric, better competitive ratios are achievable. An $O(k^{.59})$-competitive deterministic 
online algorithm was given in \cite{AntoniadisBNPS14}.
Subsequently several different $O(\log n)$-competitive randomized online algorithms for a line are given in \cite{GuptaL12}; these algorithms
leverage special properties of HST's constructed from a line metric. As already mentioned, 
\cite{GuptaL12} also showed that the natural Harmonic algorithm is $O(\log \Delta)$-competitive.
An $O(\log^2 k)$-competitive deterministic online algorithm was given in \cite{NayyarR17},
and this was later improved to $O(\log k)$ in
\cite{Raghvendra18}.
Super-constant lower bounds for various types of algorithms are given in \cite{AntoniadisFT18,KoutsoupiasN03}.
More generally, the algorithm for online
metric matching given in \cite{NayyarR17} has the property that for every metric space, its competitive
ratio is at most $O(\log^2 k)$ times the optimal competitive
ratio achievable by any deterministic algorithm on that metric space.

\subsection{Our Contribution}

There is no hope to  mimic any of the online algorithms for online metrical matching that are based on
HST's as HST's by their very nature lose too much information about the structure of a tree metric.
Therefore 
we adopt the monotonization approach. 
In Section \ref {sec:price} we identify a monotonicity property that characterizes mimicable  algorithms for
online metrical matching in tree metrics.
Roughly speaking this property says that if 
a request were to have arrived on the route to its
desired server, then the probability that the
request would still have been matched to this server can not 
decrease.
Thus we reduce finding a post-priced algorithm
to finding a monotone algorithm.

In Section \ref{sect:search} we give
an algorithm \TS\ for the
online metrical search problem on a tree metric.
The algorithm is based on the classic multiplicative weights algorithm for
online learning from experts \cite{arora2012multiplicative}.
Conceptually there is one expert $E^\ell$ for each
leaf $\ell$ of the tree $T$. Expert $E^\ell$
always recommends that the car/request
travels toward the leaf $\ell$. 
Thus expert $E^\ell$ pays a cost of one
whenever a parking spot on the path from
the root to $\ell$ is decommissioned,
a cost of zero when other parking spots
are decommissioned, and an infinite cost
if there are no remaining parking spots
on the path from the root to $\ell$. Let $\pi^\ell_t$ 
be the probability that the multiplicative
weights algorithm has associated with
expert $E^\ell$ right before request $r_t$
arrives. 
Let $v^\ell_{t}$ be the location of
the car just before request $r_t$ arrives
if the advice of expert $E^\ell$  had always been
followed. The algorithm \TS\ maintains
the invariant that right
before request $r_t$ arrives, the 
probability that the car is at a vertex $v$
is $\sum_{\ell : v^\ell_t =v} \pi^\ell_t$,
the sum of the probabilities of the experts that
recommend that the car should be parked at $v$. 
The most technically difficult part of the
 algorithm design process was maintaining 
this invariant. We then upper bound  the expected
number of jumps made by the \TS\ algorithm,
where a jump is a  movement of the car by a positive amount. 
Finally, we show how to extend \TS\ to be a
monotone algorithm \TM\ for online metrical matching on a tree metric. 

In Section algorithm for 
online metric searching on a tree metric.
Before any requests arrive, an algorithm \HTB\ 
embeds the tree metric into what we will call a grove, which is a refinement of an HST that retains more information about the topology of the original metric space. It is probably easiest
to explain what a grove is by explaining the
difference in how one is constructed in 
comparison to how an HST is constructed.
The construction of each starts with a Low Diameter Decomposition (LDD) of the metric space. 
A LDD is a partition ${\mathcal P} = \{P_1, \ldots, P_n \}$ of 
the vertices of the metric space where each
part is connected and the
diameter of each part is an $\alpha$ factor smaller than the diameter of the whole metric space. The top of the HST consists
of a star where the center of the star is
the root of the HST, and there is one child
of the root for each part $P_i$.
In contrast, the top of a grove consists 
of the tree that remains after collapsing
each part to a single vertex. 
For both an HST and a grove, the construction then
proceeds recursively on each part. 
So intuitively the key difference is that
groves retain information
about the distances between parts in the LDD that the HST instead discards. 
See Figure 1 for a comparison of an HST and a grove constructed from the same LDD.

We then give a monotone algorithm \HTM\ for online metrical matching on a tree metric that utilizes the
algorithm \TM\ on each tree in the grove constructed from the tree metric.
We show that \HTM\ is poly-log competitive 
(more precisely $O(\log^6 \Delta \log^2 n)$-competitive) on metric search instances
 by induction on the levels of the grove. This  is an extension of a
similar induction argument in \cite{MeyersonNP06} that shows that a $O(\log n)$-competitive algorithm for a star (or a complete unit metric) can be extended to an algorithm
for a $O(\log n)$-HST with the loss of a poly-log factor in the competitiveness. 
However, our situation is complicated by the
fact the possible ways that a request can potentially move within a grove is  more complicated than the possible ways a request can move within an HST, and thus the induction is more complicated as the induction depends on when the request is moving ``up'' and when the request is moving ``down'' in trees within the grove. 
The  bound on the number of
jumps made by \TS\ translates  to a bound on the 
number of recursive calls made by \HTM. 
There is not a lot of wiggle room in our analysis,
and thus both the algorithm design and algorithm
analysis process are necessarily quite delicate. 
For example, if \TS\ made just 1\% 
more jumps than the bound that we can show, then the resulting competitiveness of \HTM\ would not be poly-logarithmic.
One consequence of this delicateness is that we can
not use  a black box LDD construction to build our grove, we need to construct our LDD in a  way
that tightly controls the variance of random properties of our grove.

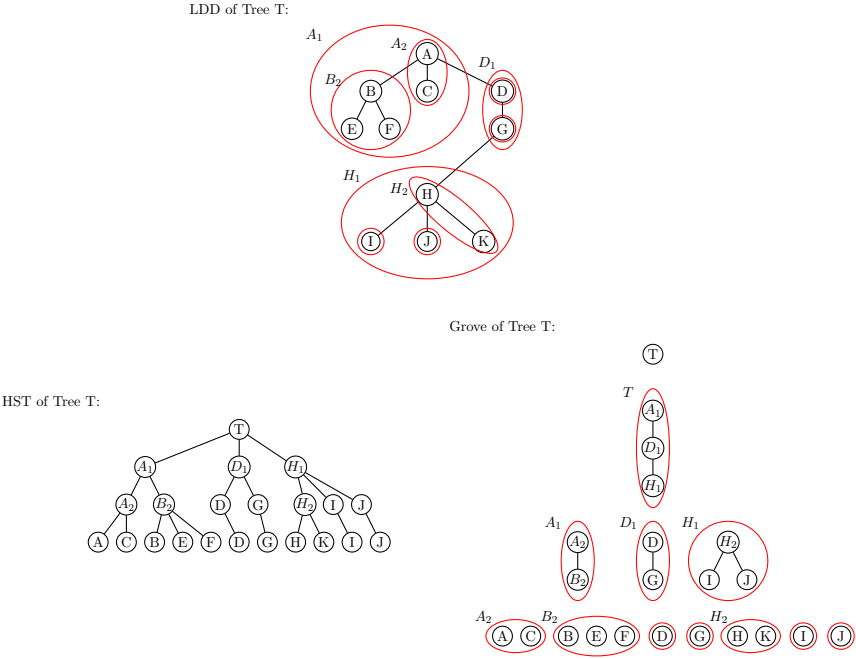
\begin{figure}[H]
\label{figure:HTexample}
\begin{center}
\centering
\begin{tikzpicture}[scale=.5, every node/.style={scale=0.5}]
\tikzstyle{vertex}=[circle,draw = black,minimum size=5pt, inner sep=2pt]
\node[vertex] (a) at (0, 0) {A};
\node[vertex] (b) at (-1.5, -1) {B};
\node[vertex] (c) at (0, -1) {C};
\node[vertex] (d) at (2, -1) { D};
\node[vertex] (e) at (-2, -2) {E};
\node[vertex] (f) at (-1, -2) {F};
\node[vertex] (g) at (2, -2) {G};
\node[vertex] (h) at (0, -3.75) {H};
\node[vertex] (i) at (-1.5, -5) {I};
\node[vertex] (j) at (0, -5) {J};
\node[vertex] (k) at (1.5, -5) {K};
\draw (a) -- (b) {};
\draw (a) -- (c) {};
\draw (a) -- (d) {};
\draw (b) -- (e) {};
\draw (b) -- (f) {};
\draw (d) -- (g) {};
\draw (g) -- (h) {};
\draw (h) -- (i) {};
\draw (h) -- (j) {};
\draw (h) -- (k) {};
\node[ellipse, draw=red, minimum height = 100pt, minimum width = 120pt] at (-1, -1) {};
\node[ellipse, draw=red, minimum height = 85pt, minimum width = 130pt] at (0, -4.5) {};
\node at (-3, .5) {$A_1$};
\node at (-2, -3.25) {$H_1$};
\node at (1.6, -.25) {$D_1$};
\node[ellipse, draw=red, minimum height = 60pt, minimum width = 30pt] at (2, -1.5) {};
\node[circle, draw=red, minimum width = 60pt] at (-1.5, -1.5) {};
\node[ellipse, draw=red, minimum height = 50pt, minimum width = 30pt] at (0, -.5) {};
\node at (-2.5, -.7) {$B_2$};
\node at (-.75, .25) {$A_2$};
\node[ellipse, rotate around={50:(0, 0)}, draw=red, minimum height = 85pt, minimum width = 25pt] at (.7, -4.3) {};
\node at (-.75, -3.6) {$H_2$};
\node[circle, draw=red, minimum width = 20pt] at (0, -5) {};
\node[circle, draw=red, minimum width = 20pt] at (-1.5, -5) {};
\node[circle, draw=red, minimum width = 20pt] at (2, -1) {};
\node[circle, draw=red, minimum width = 20pt] at (2, -2) {};
\node at (-5, 1.2) {LDD of Tree T:};

\tikzstyle{hstvertex}=[circle,draw=black,minimum size=15pt, inner sep=0pt]
\node[hstvertex] (t) at (-5, -10) {T};
\node[hstvertex] (tA1) at (-7.5, -11) {$A_1$};
\node[hstvertex] (tD1) at (-5, -11) {$D_1$};
\node[hstvertex] (tH1) at (-3.5, -11) {$H_1$};
\node[hstvertex] (tA2) at (-8, -12) {$A_2$};
\node[hstvertex] (tB2) at (-7, -12) {$B_2$};
\node[hstvertex] (tD2) at (-5.5, -12) {D};
\node[hstvertex] (tG2) at (-4.5, -12) {G};
\node[hstvertex] (tH2) at (-3.25, -12) {$H_2$};
\node[hstvertex] (tI) at (-2.5, -12) {I};
\node[hstvertex] (tJ) at (-1.75, -12) {J};
\node[hstvertex] (tA3) at (-8.75, -13) {A};
\node[hstvertex] (tC3) at (-8, -13) {C};
\node[hstvertex] (tB3) at (-7.25, -13) {B};
\node[hstvertex] (tE3) at (-6.5, -13) {E};
\node[hstvertex] (tF3) at (-5.75, -13) {F};
\node[hstvertex] (tD3) at (-5, -13) {D};
\node[hstvertex] (tG3) at (-4.25, -13) {G};
\node[hstvertex] (tH3) at (-3.5, -13) {H};
\node[hstvertex] (tK3) at (-2.75, -13) {K};
\node[hstvertex] (tI3) at (-2, -13) {I};
\node[hstvertex] (tJ3) at (-1.25, -13) {J};
\draw (t) -- (tA1) {};
\draw (t) -- (tH1) {};
\draw (t) -- (tD1) {};
\draw (tA1) -- (tA2) {};
\draw (tA1) -- (tB2) {};
\draw (tD1) -- (tD2) {};
\draw (tD1) -- (tG2) {};
\draw (tH1) -- (tH2) {};
\draw (tH1) -- (tI) {};
\draw (tH1) -- (tJ) {};
\draw (tA2) -- (tA3) {};
\draw (tA2) -- (tC3) {};
\draw (tB2) -- (tB3) {};
\draw (tB2) -- (tE3) {};
\draw (tB2) -- (tF3) {};
\draw (tD2) -- (tD3) {};
\draw (tG2) -- (tG3) {};
\draw (tH2) -- (tH3) {};
\draw (tH2) -- (tK3) {};
\draw (tI) -- (tI3) {};
\draw (tJ) -- (tJ3) {};
\node at (-10, -9.25) {HST of Tree T:};

\node[hstvertex] (ht) at (6, -8) {T};
\node[hstvertex] (hA1) at (6, -9.5) {$A_1$};
\node[hstvertex] (hD1) at (6, -10.5) {$D_1$};
\node[hstvertex] (hH1) at (6, -11.5) {$H_1$};
\node[hstvertex] (hA2) at (4, -13) {$A_2$};
\node[hstvertex] (hB2) at (4, -14) {$B_2$};
\node[hstvertex] (hD2) at (6, -13) {D};
\node[hstvertex] (hG2) at (6, -14) {G};
\node[hstvertex] (hH2) at (8, -13) {$H_2$};
\node[hstvertex] (hI2) at (7.5, -14) {I};
\node[hstvertex] (hJ2) at (8.5, -14) {J};
\node[hstvertex] (hA3) at (2, -15.5) {A};
\node[hstvertex] (hC3) at (2.75, -15.5) {C};
\node[hstvertex] (hB3) at (3.75, -15.5) {B};
\node[hstvertex] (hE3) at (4.5,  -15.5) {E};
\node[hstvertex] (hF3) at (5.25, -15.5) {F};
\node[hstvertex] (hD3) at (6.25, -15.5) {D};
\node[hstvertex] (hG3) at (7.25, -15.5) {G};
\node[hstvertex] (hH3) at (8.25, -15.5) {H};
\node[hstvertex] (hK3) at (9, -15.5) {K};
\node[hstvertex] (hI3) at (10, -15.5) {I};
\node[hstvertex] (hJ3) at (11, -15.5) {J};
\draw (hD1) -- (hA1) {};
\draw (hD1) -- (hH1) {};
\draw (hA2) -- (hB2) {};
\draw (hD2) -- (hG2) {};
\draw (hH2) -- (hI2) {};
\draw (hH2) -- (hJ2) {};

\node[ellipse, draw=red, minimum height = 90pt, minimum width = 25pt] at (6, -10.5) {};
\node[ellipse, draw=red, minimum height = 60pt, minimum width = 25pt] at (4, -13.5) {};
\node[ellipse, draw=red, minimum height = 60pt, minimum width = 25pt] at (6, -13.5) {};
\node[circle, draw=red, minimum width = 60pt] at (8, -13.5) {};
\node[ellipse, draw=red, minimum height=25pt, minimum width=45pt] at (2.35, -15.5) {};
\node[ellipse, draw=red, minimum height=30pt, minimum width=65pt] at (4.5, -15.5) {};
\node[circle, draw=red, minimum width = 20pt] at (6.25, -15.5) {};
\node[circle, draw=red, minimum width = 20pt] at (7.25, -15.5) {};
\node[ellipse, draw=red, minimum height=25pt, minimum width=45pt] at (8.6, -15.5) {};
\node[circle, draw=red, minimum width=20pt] at (10, -15.5) {};
\node[circle, draw=red, minimum width=20pt] at (11, -15.5) {};
\node at (2, -7.25) {Grove of Tree T:};
\node at (5.35, -9) {$T$};
\node at (3.35, -12.5) {$A_1$};
\node at (5.35, -12.5) {$D_1$};
\node at(7, -12.5) {$H_1$};
\node at (1.5, -15) {$A_2$};
\node at (3.25, -15) {$B_2$};
\node at (7.75, -15) {$H_2$};
\end{tikzpicture}
\end{center}

\caption{An example of a LDD, the corresponding HST, and the corresponding grove.}
\end{figure}

\section{Pricing Monotone Algorithms}
\label{sec:price}

In this section, we show that an algorithm for the online metrical matching can be implemented as a posted-price algorithm if and only if the algorithm satisfies the following monotonicity property. Intuitively, an algorithm is monotone if a request and server pair get matched with nondecreasing probability as the request is dragged towards the server. We note that monotonicity does not have a natural interpretation within the context of online metrical searching, which explains why we give a monotone algorithm for online metrical matching, even though we only analyze its competitiveness for online metrical search.

\begin{definition}
\label{definition:monotonicity}
    An algorithm $A$ for online metric matching is monotone if for every instance, every request
    $r_t$ in that instance, every possible
    sequence $R$ of random events internal to $A$ 
    prior to $r_t$'s arrival, and
    all vertices $u, v, s$ where $v$ is on the path from
    $u$ to $s$ it 
    is the case that: 
    $\Prob{A_R(r_t) = s \mid  E_R \text{ and }   r_t  = u}\leq \Prob{A_R(r_t) = s \mid  E_R \text{ and }   r_t  = v}$ 
    where $A_R(r_t) = s$ is the event that $A$ matches 
    $r_t$ to $s$, and $E_R$ is the event that the past random events internal to $A$ are equal to $R$. 
\end{definition}

\begin{theorem}
Any algorithm $A$ for the online metrical matching problem can be implemented as a posted-price algorithm if and only if $A$ is monotone.
\end{theorem}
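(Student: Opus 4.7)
The plan is to prove the two directions separately.

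For the easier direction (posted-price implies monotone), I would fix all random choices $R$ internal to the posted-price algorithm prior to the arrival of $r_t$, which in particular fixes the prices $p_s$ posted on each available server. Conditional on $E_R$, the selfish agent's response becomes the deterministic function $f_R(u) = \argmin_s (p_s + d(u,s))$. The key pointwise claim is that this $f_R$ is itself tree-monotone: if $f_R(u)=s$ and $v$ lies on the path from $u$ to $s$, then $f_R(v)=s$. This follows because in a tree metric $d(v,s)=d(u,s)-d(u,v)$, while the triangle inequality yields $d(v,s')\ge d(u,s')-d(u,v)$ for every other server $s'$; hence $p_s+d(u,s)\le p_{s'}+d(u,s')$ immediately implies $p_s+d(v,s)\le p_{s'}+d(v,s')$. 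Averaging this pointwise tree-monotonicity of $f_R$ over $R$ (with a suitable tie-breaking convention so that equalities go the right way) yields the probabilistic monotonicity of Definition~\ref{definition:monotonicity}.

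For the harder direction (monotone implies posted-price), I would split the argument into two pieces. First, a structural lemma: every tree-monotone deterministic response function $f:V\to S$ is implementable by some price vector. Since monotonicity forces each inverse image $f^{-1}(s)$ to be a connected region oriented toward $s$, the induced system of linear constraints $p_s-p_{s'}\le d(u,s')-d(u,s)$, ranging over all $u$ with $f(u)=s$ and all competing servers $s'$, should be consistent; I would verify this via a no-negative-cycle check on the induced difference graph, with the tightest constraints witnessed by vertices that lie on paths between pairs of servers. Second, a coupling lemma: the probabilistic responses $q_{u,s} = \Prob{A_R(r_t)=s \mid E_R,\ r_t=u}$ of a monotone algorithm can be realized by sampling a random tree-monotone function with the same marginals. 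I would construct this coupling recursively by rooting the tree and assigning target servers to vertices in BFS order, each time drawing from the conditional distribution given the assignments made to ancestors; the monotonicity hypothesis should provide exactly the nonnegativity needed to make each such conditional a valid probability distribution.

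The main obstacle is this coupling lemma — producing a joint distribution whose samples are \emph{simultaneously} tree-monotone at every vertex while matching all marginals $q_{u,\cdot}$. I expect the monotonicity hypothesis to be precisely the combinatorial condition required for such a coupling to exist, by analogy with Strassen's theorem on stochastic dominance, but verifying consistency of the conditional distributions across overlapping paths in the tree will require some care, in particular when two servers $s\neq s'$ compete for vertices on the path between them. Once both lemmas are in place, the posted-price algorithm is obtained by first sampling a tree-monotone function via the coupling and then posting the prices guaranteed by the structural lemma; the selfish agent's deterministic response then matches $A$'s random response in distribution, completing the proof.
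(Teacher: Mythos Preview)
Your overall decomposition matches the paper's exactly: the easy direction via a pointwise argument on the selfish response under fixed prices, and the hard direction via (i) a structural lemma that every deterministic tree-monotone response is priceable, and (ii) a coupling lemma that every monotone randomized response is a mixture of deterministic tree-monotone ones. One small correction on the easy direction: Definition~\ref{definition:monotonicity} is already conditional on $E_R$, so once you have shown $f_R$ is tree-monotone you are done; no averaging over $R$ is needed.

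Where you diverge is in the concrete constructions for the two lemmas. For the structural lemma the paper does not argue via a no-negative-cycle certificate; it gives an explicit price vector by fixing one part's price to zero and propagating across adjacent parts via $p(s_j)=p(s_i)+d(u_{i,j},s_i)-d(u_{j,i},s_j)$, where $u_{i,j}$ is the vertex of $s_i$'s part closest to $s_j$. Correctness is then checked by a two-case contradiction argument (adjacent parts, then non-adjacent parts via the last part before $s_j$ on the $s_i$--$s_j$ path). Your difference-graph approach would ultimately need the same inequalities, so the two are close in spirit; the paper's version just hands you the prices directly.

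The more substantive divergence is in the coupling lemma. The paper does \emph{not} build the random monotone partition top-down by BFS from a root. It inducts on $|V(T)|$ by peeling a leaf $u$ with neighbor $v$: servers at $u$ are temporarily moved to $v$, the inductive hypothesis yields a distribution $\mathcal P'$ over monotone partitions of $T\setminus\{u\}$, and each $P\in\mathcal P'$ is then extended to one or $m{+}1$ partitions of $T$ with explicit weights written in terms of the differences $\pi^u_i-\pi^v_i$. The point is that a leaf interacts with only one neighbor, so extending a partition is a purely local operation, and the monotonicity hypothesis is exactly what makes those extension weights nonnegative. Your BFS sketch, by contrast, must cope with the case where a vertex's target server lies \emph{below} it in the rooted tree; it is not clear what ``the conditional distribution given the assignments made to ancestors'' should be so that both the marginals and the global connectivity constraint are preserved, and I do not see Strassen's theorem giving this to you for free. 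If you try to make the BFS construction precise you will likely find yourself pushed toward the leaf-removal argument the paper uses.
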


In section \ref{sec:price-determ}, we show this correspondence for the deterministic setting and then extend it to randomized algorithms in section \ref{sec:rand-pricing}.
Lastly, in section \ref{nomononoprice} we show that all pricing schemes induce monotone matching algorithms, giving us the equivalency of pricing schemes and monotone matching algorithms. For notational simplicity we use  $d(u, v)$ instead of $d_T(u,v)$ in this section.

\subsection{Pricing Deterministic Monotone Algorithms}
\label{sec:price-determ}

%In this subsection, we prove in  Theorem   \ref{lem:det-mono-price} that all monotone deterministic algorithms can be implemented as pricing algorithms. 
We first need to  define monotone partitions. 

 \begin{definition} A monotone partition of the tree metric space
$T = (V, E, d_T)$ consists of 
two components. The first component is a partition $P = \{Q_1, \ldots, Q_t\}$ of the vertices $V$,
 such that
 for
 each  part $Q_i \in {P}$ it is the case that  the induced subgraph on $Q_i$
 is connected. The second component consists
  of a designated leader for each nonempty $Q_i \in P$, where a leader is an
 available server located in $Q_i$. 
\end{definition}

A monotone algorithm for serving a request can be simply derived from
a monotone partition by  matching a request in each part
to that part's designated leader. The converse is proved
in the course of the proof of lemma \ref{lem:det-mono-price}.
\begin{lemma}
  \label{lem:det-mono-price}
Every monotone  deterministic algorithm $\A$ for  online metrical matching  can be
implemented by a pricing algorithm $\B$. 
\end{lemma}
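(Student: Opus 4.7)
I plan to prove this in two stages: first, show that any deterministic monotone $\mathcal{A}$ induces a monotone partition of $V$ just before every request, and second, show that any monotone partition admits a posted-price implementation realizing it.

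\emph{Stage 1 (monotone algorithm $\Rightarrow$ monotone partition).} Before request $r_t$ arrives, simulate $\mathcal{A}$ on every possible request location to obtain the map $f_t: V \to \{\text{available servers}\}$ defined by $f_t(u) := \mathcal{A}(r_t = u)$. The deterministic form of Definition~\ref{definition:monotonicity} states that if $f_t(u) = s$ and $v$ lies on the $u$-to-$s$ path in $T$, then $f_t(v) = s$. Specializing $v = s$ shows $s \in f_t^{-1}(s)$ whenever this preimage is nonempty, so $s$ is a valid leader, while the same closure property forces each preimage to be a connected subtree of $T$. Thus the nonempty preimages, together with the designation of server $s$ as leader of part $f_t^{-1}(s)$, form a monotone partition, and the lemma reduces to pricing an arbitrary monotone partition.

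\emph{Stage 2 (pricing a monotone partition).} Given parts $Q_1, \ldots, Q_k$ with leaders $s_1, \ldots, s_k$, contract each part to a super-vertex to form the quotient tree $T^*$, whose edges are exactly the cut edges of $T$ between distinct parts. Root $T^*$ arbitrarily, set the root leader's price to $0$, and propagate prices top-down: for each non-root part $Q_c$ whose parent in $T^*$ is $Q_p$ via cut edge $(a,b)$ with $a \in Q_p$ and $b \in Q_c$, define
\[
p(s_c) \;=\; p(s_p) + d(a, s_p) - d(b, s_c).
\]
Any available server that is not a leader of some part receives a prohibitive price exceeding twice the diameter of $T$, making it globally dominated at every vertex. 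The recursive choice above is the midpoint of the interval of $p(s_c)-p(s_p)$ values for which the selfish agents at the two cut-edge endpoints $a$ and $b$ choose their respective leaders.

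\emph{Verification and main obstacle.} The technical heart is global dominance: for every $u \in Q_i$ and every other leader $s_j$, show that $p(s_i) + d(u, s_i) \leq p(s_j) + d(u, s_j)$. Let $Q_i = P_0, P_1, \ldots, P_m = Q_j$ be the $T^*$-path with cut edges $(a_k, b_k)$ satisfying $a_k \in P_{k-1}$ and $b_k \in P_k$. Telescoping the recursion yields $p(s_j) - p(s_i) = \sum_{k=1}^{m} \bigl[d(a_k, s_{P_{k-1}}) - d(b_k, s_{P_k})\bigr]$, while the forced $T$-path decomposition gives
\[
d(u, s_j) = d(u, a_1) + \sum_{k=1}^{m} d(a_k, b_k) + \sum_{k=1}^{m-1} d(b_k, a_{k+1}) + d(b_m, s_j).
\]
Applying the tree triangle inequalities $d(u, s_i) \leq d(u, a_1) + d(a_1, s_i)$ in $Q_i$ and $d(b_k, s_{P_k}) \leq d(b_k, a_{k+1}) + d(a_{k+1}, s_{P_k})$ inside each intermediate part $P_k$ causes the boundary cross-terms to cancel, leaving the desired inequality. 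The main obstacle is precisely this cancellation: the prices are defined by local balance at individual cut edges, so one must confirm that the telescoping plus intra-part triangle inequalities leave no residual positive term, uniformly over arbitrarily long $T^*$-paths. Once this is verified the selfish best response at every $u$ coincides with $\mathcal{A}$'s choice, completing the pricing algorithm $\mathcal{B}$.
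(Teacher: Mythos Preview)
Your proposal is correct and follows essentially the same route as the paper: both derive the monotone partition from the preimages of $\mathcal{A}$, define the identical price-propagation rule across cut edges of the quotient tree (the paper's $p(s_j)=p(s_i)+d(u_{i,j},s_i)-d(u_{j,i},s_j)$ is exactly your $p(s_c)=p(s_p)+d(a,s_p)-d(b,s_c)$), and then verify global dominance. The only difference is in the verification step: the paper argues by contradiction in two cases (adjacent parts directly, non-adjacent parts via the penultimate part on the path), whereas you telescope along the entire quotient-tree path in one shot; your telescoping also leaves the positive residual $\sum_k d(a_k,b_k)$, which you should note explicitly since it is what gives the \emph{strict} inequality needed to break ties.
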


\begin{proof}
We first explain how to derive monotone partition $P$ from
$\A$ at each time step $t$. 
For each $s_i \in S_t$, let  part $Q_i$ consist of the vertices $v$
such that a request arriving on vertex $v$ would be served by $s_i$.  The leader of each nonempty 
$Q_i$ is $s_i$. The fact that each part $Q_i$ induces
a connected subgraph follows directly from the monotonicity 
of the algorithm. Let $u_{i, j}$ be the vertex in $Q_i$ closest to $s_j$.

 We now define the pricing scheme $p:S_j\rightarrow\R$ for time step $t$ for
 pricing algorithm $\B$.
 
\SetArgSty{textnormal}
  \begin{algorithm}[H]{
      \Indp 
      Set $p(s_i) = 0$ for   an arbitrary nonempty partition $Q_i \in P$ \\
      \For{every nonempty $Q_j$ whose leader is not yet priced and that is adjacent to a part $Q_i$ whose leader is already priced}{
        Set $p(s_j) = p(s_i) +d(u_{i,j}, s_i)-d(u_{j, i}, s_j)$
      }
      \For{any server $s_i\in S_t$ that is not already priced}{
        Set $p(s) = +\infty$
      }
      \Indm }
    %\caption{Pricing Function $p$\label{alg:p}}
  \end{algorithm}

  To show that that this pricing scheme implements $\A$, we will show that if $\A$ services a request at a vertex $v$ by server
  $s_i$ then  $d(v, s_i)+p(s_i) < d(v,
  s_j)+p(s_j)$ for all $j\neq i$. Assume otherwise to 
  reach a contradiction. 
   Let $s_j$ be the  server that  minimizes $d(v, s_j) + p(s_j)$ (with ties broken arbitrarily). We now break the proof into cases.
   
   In the first case assume that
 $Q_i$ and $Q_j$ are adjacent parts, or equivalently that the edge $(u_{i,j}, u_{j,i})$ connects $Q_i$ and $Q_j$. Then we can conclude that:
  \begin{align*}
      p(s_i) &= p(s_j) + d(u_{j, i}, s_j) - d(u_{i, j}, s_i) &\text{by definition of $p(s_i)$}\\
      &\le p(s_i) + d(v, s_i)-d(v, s_j) + d(u_{j, i}, s_j) - d(u_{i, j}, s_i) & \text{by assumption}\\
      &= p(s_i)+d(v, s_i) - d(v, u_{j, i}) - d(u_{i, j}, s_i) & \text{as $u_{j, i}$ is on the path from $v$ to $s_j$}\\
      &\le p(s_i) + d(v, u_{i, j}) - d(v, u_{j, i})&\text{by triangle inequality}\\
        &\leq p(s_i) - d(u_{i,j}, u_{j,i}) & \text{as $u_{i, j}$ is on the path from $v$ to $u_{j, i}$} \\
      &< p(s_i) & \text{by definition of metric space}
  \end{align*}
  Note that the first inequality holds independently of which of $Q_i$ and $Q_j$ was priced first, so our assumption causes a contradiction.

  In the second case assume that $Q_j$ and  $Q_i$ are not adjacent, i.e. that is there is
  not an edge of the form $(u_{i,j}, u_{j,i})$ in $T$. However, let $Q_k$ be the
  last nonempty part before $Q_j$ on the unique path from $s_i$ to $s_j$ in $T$. Thus the  edge $(u_{k, j},
  u_{j, k})$ exists in $T$. Then we can conclude that:
  \begin{align*}
      d(v, s_k)+p(s_k)&\leq d(v, u_{k, j}) + d(u_{k, j},
  s_k)+p(s_k) & \text{by triangle inequality}\\
  &= d(v, u_{k, j}) + d(u_{k, j},
  s_k)\\&\ \ \ +d(u_{j, k}, s_j) - d(u_{k, j}, s_k) + p(s_j) & \text{by definition of $p(s_k)$}\\
  &= d(v, u_{k, j})+d(u_{j, k}, s_j)+p(s_j)\\
  &= d(v, u_{k, j})+d(u_{k, j}, s_j) - d(u_{j,k}, u_{k,j})+p(s_j) & \text{as $u_{j, k}$ is on the path}\\[-.5em]& &\text{from $u_{k,j}$ to $s_j$}\\
    &< d(v, u_{k, j})+d(u_{k, j}, s_j)+p(s_j) & \text{by definition of metric space}\\
  &= d(v, s_j)+p(s_j)& \text{as $u_{k, j}$ is on the path}\\[-.5em]& &\text{from $v$ to $s_j$.}
  \end{align*} This again is a contradiction 
  to our minimality assumption for $s_j$. 
\end{proof}

\subsection{Randomized Algorithms}
\label{sec:rand-pricing}

From a randomized monotone algorithm $\mathcal{A}$ we derive
a corresponding distribution $\mathcal{P}$ over monotone partitions such that can implement $\mathcal{A}$ by picking a random
monotone partition $P$ from $\mathcal{P}$, and setting prices
as described in subsection \ref{sec:price-determ}. 
We will need the following definitions.

\begin{definition}
  ~

\begin{itemize}
    \item Let $\pi_i^v$  be the probability that algorithm $\mathcal{A}$ matches a request at $v$ to $s_i$.
    \item
     Let $\Pr_{\Pc}(P)$ denote the probability of partition $P$  under distribution
  $\Pc$. 

    \item
    Let $v\rightarrow_P s$ denote  that it is the case that in monotone partition ${P}$, 
    server $s$ is the leader of 
    the part $Q$ where $v\in Q\in{P}$.
    \item Let $v\rightarrow_{\Pc} s$ be the event that in a sampled $P\sim\Pc$ it holds that $v\rightarrow_P s$.
   \item Let $\Pc_i(w)$  be the monotone partitions in $\Pc$  that have $s_i$ as the leader of the part containing $w$.
\end{itemize}
\end{definition}

Our goal for the rest of this subsection is to prove the following lemma,
which asserts the existence of an appropriate $\Pc$. 
 
\begin{lemma} 
  \label{MainLemma} 
Consider a monotone matching algorithm $\mathcal{A}$ for the online metrical matching  on a tree $T$, a particular request $r_t$, and a particular
collection $S$ of remaining available servers. Then there
  exists a distribution $\Pc$ over monotone partitions of $T$ such that: $\Pr(w\rightarrow_\Pc s_i\ |\ r_t=w) = \pi_i^w$ for all
  servers $s_i \in S$ and all vertices $w$. 
\end{lemma}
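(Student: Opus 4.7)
The plan is to construct $\mathcal{P}$ by iterative peeling. At each stage I maintain a nonnegative function $\pi$ on (vertex, server) pairs with common row-sum and satisfying the monotonicity property of Definition~\ref{definition:monotonicity}. I extract a monotone partition $P$ together with a positive weight $\alpha_P$ such that $\pi - \alpha_P \mathbb{1}_P$ remains nonnegative and monotone; $\alpha_P$ is added as the mass of $P$ in $\mathcal{P}$. The procedure terminates because each peeling step either zeros out a positive entry of $\pi$ (when $\alpha_P$ is capped by nonnegativity) or forces a monotonicity inequality to become tight (when $\alpha_P$ is capped by the monotonicity constraint), and there are only finitely many such events. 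The accumulated weights sum to $1$, and by construction the marginal $\Pr(w \rightarrow_{\mathcal{P}} s_i \mid r_t = w)$ equals $\pi_i^w$.

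The crucial substep is an \emph{extraction lemma}: any nonzero monotone $\pi$ admits a monotone partition $P$ supported on its positive entries, i.e.\ $\pi_i^v > 0$ whenever $v \rightarrow_P s_i$. To produce such a $P$, I use the fact that for each server $s_i$, monotonicity forces $T_i := \{v : \pi_i^v > 0\}$ to be a connected subtree of $T$ containing $s_i$ (if $\pi_i^u > 0$ then $\pi_i^v \ge \pi_i^u > 0$ for every $v$ on the path from $u$ to $s_i$), and the $T_i$ cover $V$ since row-sums are positive. I then root $T$ at an arbitrary vertex and process vertices top-down in BFS order: each vertex $v$ inherits its parent's server whenever that server still has positive probability at $v$, and otherwise is assigned to any server $s_i$ with $v \in T_i$. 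The resulting $f : V \to S$ is the desired monotone partition.

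The peeling step sets $\alpha_P$ to the minimum of $\min_{v \rightarrow_P s_i} \pi_i^v$ (to preserve nonnegativity) and $\min \{\pi_s^v - \pi_s^u : v \text{ on the } u\text{-to-}s \text{ path},\ v \rightarrow_P s,\ u \not\rightarrow_P s\}$ (to preserve monotonicity). The second minimum is strictly positive once the first is, because wherever subtracting $\mathbb{1}_P$ would break a monotonicity inequality for $\pi$, the inequality has strictly positive slack in $\pi$; this can be guaranteed by choosing $P$ to respect the structure of currently tight monotonicity constraints. With such an $\alpha_P$, the residual $\pi - \alpha_P \mathbb{1}_P$ is again nonnegative, has uniform row-sum, and satisfies monotonicity, so the induction continues.

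The main obstacle is verifying the extraction lemma, specifically that the greedy top-down construction yields connected parts each containing their leader. The argument rests on the observation that the greedy rule only switches a vertex's assignment away from its parent's server $s_i$ when we exit $T_i$; since $T_i$ is a connected subtree containing $s_i$ and the rooting orients it consistently, once we leave $T_i$ along a downward path we cannot re-enter it, so $f^{-1}(s_i)$ is a contiguous block of the tree containing $s_i$. Formalizing this requires a careful case analysis based on the position of $s_i$ relative to the BFS traversal, which is the most delicate part of the proof.
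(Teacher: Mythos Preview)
Your peeling approach differs substantially from the paper's. The paper argues by induction on $|V(T)|$: it removes a leaf $u$, moves the servers at $u$ to the neighbor $v$, obtains by induction a distribution $\Pc'$ over monotone partitions of the smaller tree, and then refines each $P\in\Pc'$ into at most $m+1$ monotone partitions of $T$ (where $m$ is the number of servers at $u$), with explicit closed-form probabilities. Your Birkhoff--von~Neumann-style decomposition is a natural alternative and, once correct, is arguably more conceptual; the paper's construction, on the other hand, gives an explicit recursive description of $\Pc$ that peeling does not.

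There is, however, a genuine gap in your extraction step. The rule ``otherwise assign $v$ to \emph{any} $s_i$ with $v\in T_i$'' can yield an $f$ that is not a monotone partition, failing both connectivity and leader containment. Take the path on vertices $a,b,c,d,e$ (in that order) with $s_1,s_2,s_3$ located at $a,e,c$ respectively, and
\[
\pi^a=(0.9,0.1,0),\quad \pi^b=(0.4,0.1,0.5),\quad \pi^c=(0.2,0.2,0.6),\quad \pi^d=(0.1,0.4,0.5),\quad \pi^e=(0.1,0.9,0),
\]
which is monotone. Root at $c$ and let the greedy pick $s_3$ for $c$; then $b$ and $d$ inherit $s_3$, while $a$ and $e$ must switch. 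Since $\pi_2^a=0.1>0$, your rule permits $a$ to choose $s_2$, and with $e$ also choosing $s_2$ we get $f^{-1}(s_2)=\{a,e\}$, which is disconnected. Your sentence ``once we leave $T_i$ along a downward path we cannot re-enter it'' is a true statement about the support set $T_i$, but $f^{-1}(s_i)$ is only a subset of $T_i$ and can be chosen badly.

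The repair is to strengthen the switching rule: when $v$ cannot inherit its parent's server $s_j$, choose an $s_i$ with $\pi_i^v>\pi_i^{\mathrm{parent}(v)}$. Such an $s_i$ exists because the row sums agree while $\pi_j^v=0<\pi_j^{\mathrm{parent}(v)}$, and monotonicity then forces $s_i$ to lie in the subtree below $v$; every vertex on the $v$-to-$s_i$ path then inherits $s_i$, so $s_i\in f^{-1}(s_i)$, and a short argument shows each $f^{-1}(s_i)$ is connected. This rule also rescues your vague ``$P$ respects tight constraints'' step: at every boundary edge of every part the slack $\pi_s^v-\pi_s^u$ is strictly positive (either $v$ just switched and the rule gives $\pi_s^v>\pi_s^u$, or the child $u$ switched away from $s$ so $\pi_s^u=0<\pi_s^v$), hence $\alpha_P>0$; since neither zero entries nor tight monotonicity inequalities can be undone by subtraction, each peel strictly increases the count of tight constraints and the process terminates.
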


\subsubsection{The Construction of $\Pc$}
  
  Without loss of generality it is sufficient to take $t=1$.
  The proof will be via induction on the number of vertices in $T$.
  The base case is the case where $T$ is a single vertex $v$ containing all $n$
  available servers on it. Then $\mathcal{P}$ consists of $n$
  monotone partitions, where monotone partition $P_i$ consists of one part
  containing vertex $v$ with server $s_i$ as the leader, and
  with associated probability $\pi^v_i$.

  For the inductive step, pick an
  arbitrary leaf $u$. Let $v$ be the unique neighbor of $u$ in $T$.  By renumbering assume servers $s_1, \ldots, s_m$ are
  located at $u$.  Let $T'$ be the tree derived from $T$ by deleting $u$, and moving servers $s_1, \ldots, s_m$ to $v$.
By induction there exists a probability distribution $\P'$ over monotone
  partitions of $T'$ such that
  \begin{equation}\label{inductiveeq}
      \sum_{P \in \Pc'_i(w)}\Pr_{\P'}(P) = \pi^w_i
  \end{equation} for all vertices $w$ in $T'$ and servers $s_i$.
 We now  obtain $\Pc$ from  $\Pc'$ 
 by extending each monotone partition $P$ in $\Pc'$ to a collection of monotone partitions
 in $\Pc$. So consider an arbitrary $P \in \Pc'$. 
We consider two cases. 

\paragraph{Case 1:} $P$ is of type 1 if $P\in\Pc'_i(v)$ for some $1\leq i\leq m$; That is, one of $s_1, \ldots , s_m$ is the leader of the part $Q$ satisfying
$v\in Q\in P$. There will be one
partition $P_1$ in $\Pc$ derived from $P$.
The partition $P_1$ is identical to $P$ except that the vertex $u$ is added to the part $Q$. $P_1$ inherits the probability of $P$, that is:
\begin{equation}\label{P1eq}
\Pr_\Pc(P_1) = \Pr_{\Pc'}(P).
\end{equation}

%\newpage
\paragraph{Case 2:} $P$ is of type 2 if $P\in\Pc'_i(v)$ for some $i>m$; That is, one of $s_{m+1}, \ldots , s_n$ is the leader of the part $Q$ satisfying
$v\in Q\in P$. There will be $m+1$
partitions $P_1, \ldots, P_{m+1}$ 
in $\Pc$ derived from $P$. 
For $j \in [1, m]$ the partition $P_j$
is identical to $P$ except that
$P_j$ contains a new part
consisting of only the vertex $u$
with the leader of this part being
$s_j$.
Partition $P_{m+1}$ is identical to $P$
except that in $P_{m+1}$ the part $Q$ satisfying $v \in Q  \in P$ also contains the vertex  $u$.

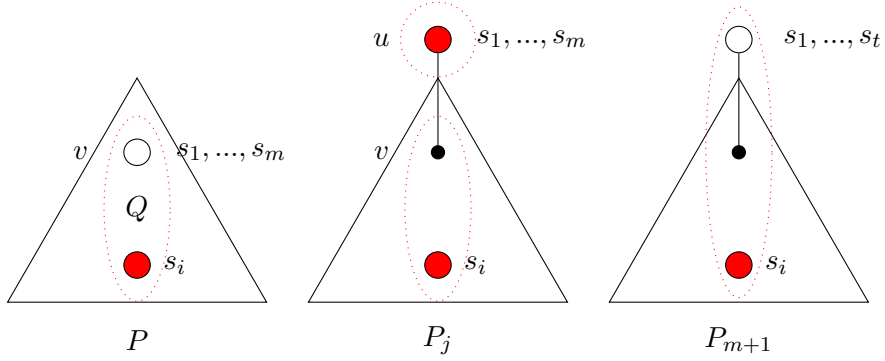
\begin{figure}[h]
\begin{center}\hspace{.3 in}
\begin{tikzpicture}
\tikzstyle{server}=[circle,fill=white, draw = black,minimum size=10pt,inner sep=0pt]
\tikzstyle{vertex}=[circle,fill=black, draw = black,minimum size=5pt,inner sep=0pt]
\tikzstyle{tree} = [regular polygon, regular polygon sides = 3, inner sep = 20 pt, draw = black]
\node[server, fill = red] (a) at (0, .5) {};
\node at (1.25, .5) {$s_1, ..., s_m$};
\node[tree] (c) at (0, -2) {};
\node[vertex] (b) at (0, -1) {};
\node[server, fill=red] (d) at (0, -2.5){};
\node[circle, draw= red, dotted, inner sep = 10 pt] at (0, .5) {};
\node[ellipse, minimum height = 70pt, minimum width = 25pt, draw = red, dotted] at (0, -1.75) {};
\node at (.5, -2.5) {$s_i$};
\draw (a) -- (b) {};
\node at (-4, -1.75) {$Q$};

\node[tree] (c') at (-4, -2) {};
\node[server] (a') at (-4, -1) {};
\node[server, fill=red] (d) at (-4, -2.5){};
\node at (-2.75, -1) {$s_1, ..., s_m$};
\node[ellipse, minimum height = 70pt, minimum width = 25pt, draw = red, dotted] at (-4, -1.75) {};
\node at (-4, -3.5) {$P$};
\node at (0, -3.5) {$P_j$};
\node at (4, -3.5) {$P_{m+1}$};
\node at (-3.5, -2.5) {$s_i$};

\node at (-4.75, -1) {$v$};
\node at (-.75, -1) {$v$};
\node at (-.75, .5) {$u$};

\node[server] (a) at (4, .5) {};
\node at (5.25, .5) {$s_1, ..., s_t$};
\node[tree] (c) at (4, -2) {};
\node[vertex] (b) at (4, -1) {};
\node[server, fill = red] (d) at (4, -2.5){};
\node[ellipse, minimum height = 110pt, minimum width = 25pt, draw = red, dotted] at (4, -1) {};
\node at (4.5, -2.5) {$s_i$};
\draw (a) -- (b) {};
\end{tikzpicture}
\end{center}
\caption{An example of case 2 $P_j$ for $1\leq j \leq m$ and $P_{m+1}$}
\end{figure}
%\newpage
To set the probabilities on $P_1, \ldots, P_{m+1}$ in $\Pc$ let $\delta_k = \pi^u_k-\pi^v_k$ for $k \in [1, m]$ and let $\Delta = \sum_{k=1}^m \delta_k$.
For each $j\in [1, m]$, we set 
\begin{equation}\label{Pjeq}
    \Pr_\Pc(P_j) =\Pr_{\Pc'}(P) \left(\frac{(\pi^u_j-\pi^v_j)(\pi^v_i-\pi^u_i)}{\Delta\cdot\pi^v_i}\right)
\end{equation} and we set
\begin{equation}\label{eqPm+1}
    \Pr_\Pc(P_{m+1}) =\Pr_{\Pc'}(P)\frac{\pi^u_i}{\pi^v_i}
\end{equation} 

The following two observations ensure that the
probability of $P_1, \ldots, P_{m+1}$ are
well-defined and that $\Pc$ is indeed a probability distribution.

\begin{observation}
For each $k\in\{1, ..., m\}$, we have that $\pi^u_k\geq\pi^v_k$; for each $k\in\{m+1, ..., n\}$, we have that $\pi^v_k\geq\pi^u_k$.
\end{observation}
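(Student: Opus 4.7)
The plan is to derive both inequalities as direct applications of the monotonicity property of $\mathcal{A}$ (Definition \ref{definition:monotonicity}), leveraging the fact that $u$ is a leaf of $T$ whose unique neighbor is $v$. The structural point is that for every server location $s$ in $T$, either $s$ coincides with $u$, in which case $u$ lies on the $v$--$s$ path, or $s \neq u$, in which case $v$ must lie on the $u$--$s$ path (because every path leaving a leaf must traverse its unique incident edge). This is precisely the configuration that monotonicity is designed to exploit.

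For $k \in \{1, \ldots, m\}$, server $s_k$ sits at $u$, so the length-one path from $v$ to $s_k$ trivially passes through $u$. Instantiating Definition \ref{definition:monotonicity} with the request dragged from $v$ to $u$ toward the target $s_k$ gives $\Pr[\mathcal{A}(r_t) = s_k \mid r_t = v] \leq \Pr[\mathcal{A}(r_t) = s_k \mid r_t = u]$, i.e., $\pi^v_k \leq \pi^u_k$. For $k \in \{m+1, \ldots, n\}$, server $s_k$ lies at a vertex other than $u$, and because $u$ is a leaf with sole neighbor $v$, the vertex $v$ lies on the $u$--$s_k$ path in $T$. Applying monotonicity in the opposite direction then yields $\pi^u_k \leq \pi^v_k$.

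The only bookkeeping detail is that Definition \ref{definition:monotonicity} states its inequality conditionally on a realization $R$ of $\mathcal{A}$'s internal randomness prior to $r_t$, whereas $\pi^u_k$ and $\pi^v_k$ are unconditional probabilities. Since the conditional inequalities hold for every such $R$, they are preserved by averaging over $R$, yielding the unconditional bounds as stated. There is no real obstacle here: the observation is an immediate consequence of monotonicity once the leaf structure is used to pin $u$ and $v$ into the required relative positions with respect to each $s_k$.
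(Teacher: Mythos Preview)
Your proposal is correct and follows essentially the same approach as the paper: both invoke monotonicity of $\mathcal{A}$ together with the fact that $u$ is a leaf with unique neighbor $v$, so that $u$ lies on the $v$--$s_k$ path when $s_k$ sits at $u$ (giving $\pi^u_k \geq \pi^v_k$ for $k \leq m$) and $v$ lies on the $u$--$s_k$ path otherwise (giving $\pi^v_k \geq \pi^u_k$ for $k > m$). Your extra remark about averaging over the internal randomness $R$ is a harmless refinement; in the surrounding argument the paper has already reduced to $t=1$, so the conditioning on $E_R$ is vacuous anyway.
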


\begin{proof}
This follows from the monotonicity of $\Ac$; it is assumed that $s_1, ..., s_m$ are all on vertex $u$, so for $k\in\{1, ..., m\}$ it must be the case that $\pi^u_k\geq\pi^v_k$. Since $u$ is a leaf vertex and $v$ is $u$'s only neighbor, $v$ must be on the path from $u$ to any other server. Thus $\pi^v_k\geq\pi^u_k$ for all $k\in\{m+1, ..., n\}$. 
\end{proof}

\begin{observation}
\label{obs:extension}
  $\sum_{j=1}^{m+1}\Pr_\Pc(P_j) = \Pr_{\Pc'}(P)$.
\end{observation}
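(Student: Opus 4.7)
The proof should be a direct algebraic verification that the probabilities assigned to $P_1,\ldots,P_{m+1}$ sum to $\Pr_{\Pc'}(P)$, using only the definitions in equations \eqref{Pjeq} and \eqref{eqPm+1} together with the definition $\Delta = \sum_{k=1}^m(\pi^u_k-\pi^v_k)$. There is no real combinatorial or probabilistic obstacle here; the content of the observation is essentially that the authors chose the weights in \eqref{Pjeq} and \eqref{eqPm+1} to telescope correctly.

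My plan is to factor $\Pr_{\Pc'}(P)/\pi^v_i$ out of the sum and isolate the two pieces. Concretely, I would write
\[
\sum_{j=1}^{m+1}\Pr_\Pc(P_j) \;=\; \Pr_{\Pc'}(P)\left[\frac{\pi^v_i-\pi^u_i}{\Delta\cdot\pi^v_i}\sum_{j=1}^{m}\left(\pi^u_j-\pi^v_j\right) \;+\; \frac{\pi^u_i}{\pi^v_i}\right],
\]
then apply the identity $\sum_{j=1}^m(\pi^u_j-\pi^v_j)=\Delta$ to collapse the first term to $(\pi^v_i-\pi^u_i)/\pi^v_i$. Adding this to $\pi^u_i/\pi^v_i$ yields $1$, giving $\sum_{j=1}^{m+1}\Pr_\Pc(P_j)=\Pr_{\Pc'}(P)$, as desired.

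The only subtle point worth flagging in the write-up is that the expressions in \eqref{Pjeq} and \eqref{eqPm+1} must actually define valid (nonnegative, well-defined) probabilities in the first place, so that the equality is not vacuous. This is exactly what the preceding observation guarantees: since $P$ is of type~2 we have $i > m$, so the observation yields $\pi^v_i \geq \pi^u_i \geq 0$ and $\pi^u_j \geq \pi^v_j$ for $j \in [1,m]$, making every factor in \eqref{Pjeq} and \eqref{eqPm+1} nonnegative, and making the denominators $\Delta\cdot\pi^v_i$ and $\pi^v_i$ strictly positive in the nondegenerate case (the degenerate cases $\pi^v_i=0$ or $\Delta=0$ can be handled by observing that $\Pr_{\Pc'}(P)$ must itself be zero whenever the construction would divide by zero, so those terms drop out). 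With this justification in place, the algebraic computation above completes the proof.
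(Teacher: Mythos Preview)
Your proposal is correct and follows essentially the same algebraic computation as the paper's proof: factor out the constants, apply $\sum_{j=1}^m(\pi^u_j-\pi^v_j)=\Delta$, and observe that $(\pi^v_i-\pi^u_i)/\pi^v_i + \pi^u_i/\pi^v_i = 1$. One small caveat on your side remark: in the degenerate case $\Delta=0$ it is not true that $\Pr_{\Pc'}(P)$ must vanish; rather, each numerator $\pi^u_j-\pi^v_j$ vanishes individually (and then $\pi^u_i=\pi^v_i$, so $P_{m+1}$ alone carries all of $\Pr_{\Pc'}(P)$), but this does not affect the argument for the observation itself.
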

\begin{proof}
\begin{align*}
    \sum_{j=1}^{m+1}\Pr_\Pc(P_j) &= \sum_{j=1}^m \Pr_{\Pc'}(P) \left(\frac{(\pi^u_j-\pi^v_j)(\pi^v_i-\pi^u_i)}{\Delta\cdot\pi^v_i}\right) + \Pr_{\Pc'}(P)\frac{\pi^u_i}{\pi^v_i} & \text{by Eq.s \ref{Pjeq} and \ref{eqPm+1}}\\
    &=\Pr_{\Pc'}(P)\frac{\pi^v_i-\pi^u_i}{\Delta\cdot\pi^v_i}\sum_{j=1}^m(\pi^u_j-\pi^v_j) + \Pr_{\Pc'}(P)\frac{\pi^u_i}{\pi^v_i}\\
    &=\Pr_{\Pc'}(P)\frac{\pi^v_i-\pi^u_i}{\Delta\cdot\pi^v_i}\Delta + \Pr_{\Pc'}(P)\frac{\pi^u_i}{\pi^v_i} & \text{by definition of $\Delta$}\\
    &=\frac{\Pr_{\Pc'}(P)}{\pi^v_i}\left(\pi^v_i-\pi^u_i+\pi^u_i\right)\\
    &=\frac{\Pr_{\Pc'}(P)}{\pi^v_i}\left(\pi^v_i\right) = \Pr_{\Pc'}(P).
\end{align*}
\end{proof}

\subsubsection{The analysis of $\Pc$}

We now turn to proving that our constructed $\Pc$
has the desired properties. 

\begin{definition}
  We partition the support of $\Pc$ as follows $$\text{supp}(\Pc) = \bigcup_{i=1}^m\Phi_i\cup\left(\bigcup_{i=m+1}^n\left(\Phi_i\cup\left(\bigcup_{j=1}^m\Pi_i^j\right)\right)\right)$$
where
\begin{equation}\label{eqForPhi}\Phi_i = \{P_1\ |\ P\in\Pc'_i(v)\}\end{equation} for each $1\leq i\leq m$,
\begin{equation}\label{eqforcapPi}
    \Pi_i^j = \{P_j\ | P\in\Pc'_i(v)\}
\end{equation}
for $i\in\{m+1, .., n\}$ and  $j\in\{1, ..., m\}$, and
\begin{equation}\label{Phieq2}
    \Phi_i = \{P_{m+1}\ |\ P\in\Pc'_i(v)\}
\end{equation}for $i\in\{m+1, ..., n\}$.
\end{definition}

\begin{lemma}
For $i\in\{1, ..., m\}$, we have that $\sum_{P\in\Phi_i}\Pr_\Pc(P) = \pi^v_i$.
\end{lemma}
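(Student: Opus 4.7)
The plan is to unwind the definitions and apply the inductive hypothesis that produced $\Pc'$. Since $i \in \{1, \ldots, m\}$, membership $P \in \Pc'_i(v)$ means one of the servers $s_1,\ldots,s_m$ (which were relocated from $u$ to $v$ when forming $T'$) is the leader of the part containing $v$ in $P$. By the construction of $\Pc$ in Section $\ref{sec:price-determ}$, these are precisely the monotone partitions treated by Case 1, so the map $P \mapsto P_1$ is a bijection from $\Pc'_i(v)$ onto $\Phi_i$ (and it is indeed a bijection because in any $P_1$ produced by Case 1, the vertex $u$ lies in the same part as $v$ with a leader from $\{s_1,\ldots,s_m\}$, a structural signature distinct from any Case 2 output).

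Given this bijection, I would invoke Equation $\ref{P1eq}$, which states $\Pr_\Pc(P_1) = \Pr_{\Pc'}(P)$, to rewrite
\[
\sum_{P \in \Phi_i} \Pr_\Pc(P) \;=\; \sum_{P \in \Pc'_i(v)} \Pr_\Pc(P_1) \;=\; \sum_{P \in \Pc'_i(v)} \Pr_{\Pc'}(P).
\]
Then I would close out by applying the inductive hypothesis (Equation $\ref{inductiveeq}$) with $w = v$, which asserts $\sum_{P \in \Pc'_i(v)} \Pr_{\Pc'}(P) = \pi^v_i$, yielding the claim.

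There is no real obstacle in this lemma—it is a direct consistency check that Case 1 of the construction has preserved the marginal probability $\pi^v_i$ for the servers $s_1,\ldots,s_m$ that we relocated. The genuinely nontrivial accounting, which I expect to appear in subsequent lemmas, will be for $i > m$, where the probability mass on $w \rightarrow_{\Pc} s_i$ has to be assembled from both $\Phi_i$ (arising from Case 2's $P_{m+1}$) and the $\Pi_i^j$ sets (arising from Case 2's $P_j$ for $j \leq m$), and the Case 2 probabilities from Equations $\ref{Pjeq}$ and $\ref{eqPm+1}$ must telescope correctly against $\pi^v_i - \pi^u_i$ and $\pi^u_i$ respectively.
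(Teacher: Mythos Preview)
Your proposal is correct and follows essentially the same three-step approach as the paper: use the definition of $\Phi_i$ (Equation~\ref{eqForPhi}) to reindex the sum over $\Pc'_i(v)$, apply Equation~\ref{P1eq} to replace $\Pr_\Pc(P_1)$ by $\Pr_{\Pc'}(P)$, and conclude via the inductive hypothesis (Equation~\ref{inductiveeq}). One minor slip: the construction of $\Pc$ lives in Section~\ref{sec:rand-pricing}, not Section~\ref{sec:price-determ}.
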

\begin{proof}
\begin{align*}
    \sum_{P\in\Phi_i}\Pr_\Pc(P) &= \sum_{P\in\Pc'_i(v)}\Pr_{\Pc}(P_1) & \text{by Eq. \ref{eqForPhi}}\\
    &= \sum_{P\in\Pc'_i(v)}\Pr_{\Pc'}(P) & \text{by Eq. \ref{P1eq}}\\
    &=\pi^v_i & \text{by Eq. \ref{inductiveeq}}
\end{align*}
\end{proof}

\begin{lemma}
For $i\in\{m+1, ..., n\}$, we have that $\sum_{P\in\Phi_i}\Pr_\Pc(P) = \pi^u_i$.
\end{lemma}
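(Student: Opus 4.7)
The plan is to mirror the structure of the immediately preceding lemma, since the definitions are set up so that the proof reduces to a short chain of substitutions. For each $i \in \{m+1, \ldots, n\}$, the set $\Phi_i$ is (by Eq.~\ref{Phieq2}) exactly the collection of partitions $P_{m+1}$ produced in Case~2 from each $P \in \Pc'_i(v)$. So the sum over $\Phi_i$ can be reindexed as a sum over $\Pc'_i(v)$ of $\Pr_\Pc(P_{m+1})$.

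Next I would apply Eq.~\ref{eqPm+1}, which gives $\Pr_\Pc(P_{m+1}) = \Pr_{\Pc'}(P) \cdot \pi^u_i / \pi^v_i$. Since the factor $\pi^u_i/\pi^v_i$ does not depend on the particular $P \in \Pc'_i(v)$, I can pull it out of the sum, leaving $\sum_{P \in \Pc'_i(v)} \Pr_{\Pc'}(P)$. By the inductive hypothesis (Eq.~\ref{inductiveeq}), this last sum equals $\pi^v_i$. Multiplying back by $\pi^u_i/\pi^v_i$ yields $\pi^u_i$, as claimed.

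There is no real obstacle here: the construction of $\Pc$ in Case~2 was precisely engineered so that the mass contributed by $P_{m+1}$'s is $\pi^u_i/\pi^v_i$ times the mass of the originating partitions in $\Pc'_i(v)$, which in turn totals $\pi^v_i$ by induction. The one thing worth checking along the way is that $\pi^v_i > 0$ whenever $\Pc'_i(v)$ is nonempty, so the division in Eq.~\ref{eqPm+1} is well-defined; this is immediate because $\Pc'_i(v)$ nonempty implies (again by induction) that $\pi^v_i \geq \Pr_{\Pc'}(P) > 0$ for any $P \in \Pc'_i(v)$. The proof should fit in a short \texttt{align*} display, completely analogous in format to the preceding lemma's proof.
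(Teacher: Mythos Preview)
Your proposal is correct and follows essentially the same approach as the paper's own proof: reindex the sum via Eq.~\ref{Phieq2}, substitute using Eq.~\ref{eqPm+1}, factor out $\pi^u_i/\pi^v_i$, and apply the inductive hypothesis Eq.~\ref{inductiveeq}. Your additional remark about $\pi^v_i > 0$ ensuring the division is well-defined is a nice sanity check that the paper omits.
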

\begin{proof}
\begin{align*}
    \sum_{P\in\Phi_i}\Pr_\Pc(P) &= \sum_{P\in\Pc'_i(v)}\Pr_{\Pc}(P_{m+1}) & \text{by Eq. \ref{Phieq2}}\\
    &= \sum_{P\in\Pc'_i(v)}\Pr_{\Pc'}(P)\frac{\pi^u_i}{\pi^v_i} & \text{by Eq. \ref{eqPm+1}}\\
    &= \frac{\pi^u_i}{\pi^v_i}\sum_{P\in\Pc'_i(v)}\Pr_{\Pc'}(P) & \text{by Eq. \ref{eqPm+1}}\\
    &=\frac{\pi^u_i}{\pi^v_i}\pi^v_i & \text{by Eq. \ref{inductiveeq}}\\
    &=\pi^u_i
\end{align*}
\end{proof}

\begin{lemma}   For $i\in\{m+1, ..., n\}$, we have that $\sum_{j=1}^m\sum_{P\in\Pi_i^j}\Pr_\Pc(P) = \pi^v_i-\pi^u_i$.
\begin{proof}
  
\begin{align*}
    \sum_{j=1}^m\sum_{P\in\Pi_i^j}\Pr_\Pc(P) &= \sum_{j=1}^m\sum_{P\in\Pc'_i(v)}\Pr_\Pc(P_{j}) & \text{by Eq. \ref{eqforcapPi}}\\
    &= \sum_{j=1}^m\sum_{P\in\Pc'_i(v)}\Pr_{\Pc'}(P) \left(\frac{(\pi^u_j-\pi^v_j)(\pi^v_i-\pi^u_i)}{\Delta\cdot\pi^v_i}\right) & \text{by Eq. \ref{Pjeq}}\\
    &= \sum_{j=1}^m\left(\frac{(\pi^u_j-\pi^v_j)(\pi^v_i-\pi^u_i)}{\Delta\cdot\pi^v_i}\right)\sum_{P\in\Pc'_i(v)}\Pr_{\Pc'}(P)\\
    &= \sum_{j=1}^m\left(\frac{(\pi^u_j-\pi^v_j)(\pi^v_i-\pi^u_i)}{\Delta\cdot\pi^v_i}\right)\pi^v_i & \text{by Eq. \ref{inductiveeq}}\\
    &= \frac{\pi^v_i-\pi^u_i}{\Delta}\sum_{j=1}^m(\pi^u_j-\pi^v_j)\\
    &= \frac{\pi^v_i-\pi^u_i}{\Delta}\Delta & \text{by definition of $\Delta$}\\
    &=\pi^v_i-\pi^u_i.
\end{align*}
\end{proof}
\end{lemma}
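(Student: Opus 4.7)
The plan is to simply chase the definitions and apply the inductive hypothesis. By Eq.~\ref{eqforcapPi}, the set $\Pi_i^j = \{P_j \mid P \in \Pc'_i(v)\}$ is indexed by partitions $P \in \Pc'_i(v)$, so rewriting the outer sum gives
\[
\sum_{j=1}^m \sum_{P \in \Pi_i^j} \Pr_\Pc(P) \;=\; \sum_{j=1}^m \sum_{P \in \Pc'_i(v)} \Pr_\Pc(P_j).
\]

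Next I would substitute the formula from Eq.~\ref{Pjeq}, namely $\Pr_\Pc(P_j) = \Pr_{\Pc'}(P)\cdot \frac{(\pi^u_j-\pi^v_j)(\pi^v_i - \pi^u_i)}{\Delta \cdot \pi^v_i}$. Since the factor $\frac{(\pi^u_j-\pi^v_j)(\pi^v_i - \pi^u_i)}{\Delta \cdot \pi^v_i}$ does not depend on $P$, it factors out of the inner sum, leaving $\sum_{P \in \Pc'_i(v)} \Pr_{\Pc'}(P)$, which equals $\pi^v_i$ by the inductive hypothesis (Eq.~\ref{inductiveeq}). The $\pi^v_i$ cancels with the one in the denominator, yielding
\[
\sum_{j=1}^m \frac{(\pi^u_j - \pi^v_j)(\pi^v_i - \pi^u_i)}{\Delta}.
\]

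Finally, I would pull $\frac{\pi^v_i - \pi^u_i}{\Delta}$ outside the remaining sum over $j$, and observe that $\sum_{j=1}^m (\pi^u_j - \pi^v_j) = \Delta$ by the definition of $\Delta$. The $\Delta$ factors cancel, producing the claimed value $\pi^v_i - \pi^u_i$.

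There is no real obstacle here: the clever step has already been done in \emph{choosing} the weights in Eq.~\ref{Pjeq}, which were engineered precisely so that this telescoping works out. The lemma is essentially a verification that the construction assigned the right marginal probability mass to the ``type 2'' partitions that split the leaf $u$ off into a singleton part led by some $s_j$ with $j \le m$; the analogous verifications for $\Phi_i$ already completed in the preceding two lemmas serve as a template, and the only twist here is the extra outer sum over $j$, which collapses through the definition of $\Delta$.
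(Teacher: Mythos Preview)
Your proposal is correct and follows essentially the same approach as the paper's proof: rewrite via Eq.~\ref{eqforcapPi}, substitute Eq.~\ref{Pjeq}, factor and apply the inductive hypothesis Eq.~\ref{inductiveeq}, then collapse the outer sum using the definition of $\Delta$. The steps and their order match the paper's argument almost line for line.
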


\begin{proof}[Proof of lemma \ref{MainLemma}]
By Observation \ref{obs:extension} we know that $\Pr(w\rightarrow_\Pc s_i\ |\ r_t=w) = \pi_i^w$  holds
for all vertices $w$ except for possibly for vertex $u$. Thus we just need to verify that this holds for $w=u$. 

For $i\in\{1, ..., m\}$, we have that 
\begingroup
\allowdisplaybreaks
\begin{align*}
    &\sum_{P\in\Pc_i(u)}\Pr_\Pc(P) = \sum_{P\in\Phi_i}\Pr_\Pc(P) + \sum_{j=m+1}^n\sum_{P\in\Pi_j^i}\Pr_\Pc(P)\\
    &= \sum_{P\in\Pc'_i(v)}\Pr_\Pc(P_1) + \sum_{j=m+1}^n\sum_{P\in\Pc'_j(v)}\Pr_\Pc(P_{i}) & \text{by Eq.s \ref{eqForPhi} and \ref{eqforcapPi}}\\
    &= \sum_{P\in\Pc'_i(v)}\Pr_{\Pc'}(P) + \sum_{j=m+1}^n\sum_{P\in\Pc'_j(v)}\Pr_{\Pc'}(P)\left(\frac{(\pi^u_i-\pi^v_i)(\pi^v_j-\pi^u_j)}{\Delta\cdot\pi^v_j}\right) & \text{by Eq.s \ref{P1eq} and \ref{Pjeq}}\\
    &= \pi^v_i + \sum_{j=m+1}^n\sum_{P\in\Pc'_j(v)}\Pr_{\Pc'}(P)\left(\frac{(\pi^u_i-\pi^v_i)(\pi^v_j-\pi^u_j)}{\Delta\cdot\pi^v_j}\right) & \text{by Eq. \ref{inductiveeq}}\\
    &= \pi^v_i + \sum_{j=m+1}^n\left(\frac{(\pi^u_i-\pi^v_i)(\pi^v_j-\pi^u_j)}{\Delta\cdot\pi^v_j}\right)\sum_{P\in\Pc'_j(v)}\Pr_{\Pc'}(P)\\
    &= \pi^v_i + \sum_{j=m+1}^n\left(\frac{(\pi^u_i-\pi^v_i)(\pi^v_j-\pi^u_j)}{\Delta\cdot\pi^v_j}\right)\pi^v_j & \text{by Eq. \ref{inductiveeq}}\\
    &= \pi^v_i + \frac{\pi^u_i-\pi^v_i}{\Delta}\sum_{j=m+1}^n(\pi^v_j-\pi^u_j)\\
    &= \pi^v_i + \frac{\pi^u_i-\pi^v_i}{\Delta}\sum_{j=1}^m(\pi^u_j-\pi^v_j) & \text{as both $\pi^u$ and $\pi^v$}\\[-1.5em]& &\text{are distributions}\\[-.5em]& &\text{over $1, ..., n$}\\
    &= \pi^v_i + \frac{\pi^u_i-\pi^v_i)}{\Delta}\Delta\\
    &= \pi^v_i+\pi^u_i-\pi^v_i = \pi^u_i
\end{align*}
\endgroup
For $i\in\{m+1, ..., n\}$, we have that 
\begingroup
\allowdisplaybreaks
\begin{align*}
    \sum_{P\in\Pc_i(u)}\Pr_\Pc(P) &= \sum_{P\in\Phi_i}\Pr_\Pc(P)\\
    &= \sum_{P\in\Pc'_i(v)}\Pr_{\Pc}(P_{m+1}) & \text{by Eq. \ref{Phieq2}}\\
    &=\sum_{P\in\Pc'_i(v)}\Pr_{\Pc'}(P)\frac{\pi^u_i}{\pi^v_i} & \text{by Eq. \ref{eqPm+1}}\\
    &=\frac{\pi^u_i}{\pi^v_i}\sum_{P\in\Pc'_i(v)}\Pr_{\Pc'}(P)\\
    &=\frac{\pi^u_i}{\pi^v_i}\pi^v_i=\pi^u_i & \text{by Eq. \ref{inductiveeq}}
\end{align*}
\endgroup This gives us that $\Pr(u\rightarrow_\Pc s_i) = \pi^u_i$ for all $i\in\{1, ..., n\}$.
\end{proof}

\subsection{Pricing Schemes Induce Monotone Matching Algorithms}\label{nomononoprice}

In this section, we show matching requests in an instance of  online metrical matching  according to a pricing scheme gives us a monotone algorithm. Specifically, given servers $S$ and a pricing scheme $\Pc$, let $p$ be the pricing function created by $\Pc$ prior to the arrival of the first request, and let $f_1: V\rightarrow S$ be the matching function such that $f(v) = \min_{s\in S}d(v, s) + p(s)$. Since any instance of  online metrical matching  after $j$ servers have been used is equivalent to an instance of  online metrical matching  with $n-j$ initial available servers, it's sufficient to just show that matching according to the pricing scheme is monotone on the first request. Now, let $s\in S$ and $u, v\in V$ such that $v$ is on the path from $u$ to $s$. Then, since $d(v, s)\leq d(u, s)$, we have that $\Pr(f(u)=s)\leq\Pr(f(v)=s)$. Thus $f$ is a monotone matching function. Since the associated matching function of any pricing scheme is monotone, this gives us the equivalence of pricing schemes and monotone matching algorithms for  online metrical matching .

\section{The Algorithm \TM}
\label{sect:search}

In subsection \ref{subsec:treematch} we define algorithm \TS\ for the metric search problem on a tree
$T = (V, E)$  rooted
at vertex a $\rho$. The distance metric on $T$ will not be of interest to us in this section. We will use the interpretation
of a car moving when its parking spot is decommissioned,
as introduced earlier, as we think that this interpretation
is more intuitive. 
The description of \TS\ in subsection \ref{subsec:treematch}
uses  a probability distribution $q^\sigma_t(\tau)$ that is complicated to define, so its exact definition is postponed until
subsection \ref{subsec:qdefine}, in which we also show that it achieves our goal of matching the experts distribution.
In subsection \ref{subsec:treematchcost} we analyze the number of jumps used by the \TS\ algorithm. 
Finally in subsection \ref{subsec:treematchmonotonicity}, we show how to
convert \TS\ into a monotone  algorithm \TM\ for online
metrical matching that is identical to \TS\ on online metrical search instances.

%- note that since all decommissionings are synonymous with specific locations in the metric space, we will often refer to the location of a decommissioning $r_t$ by $r_t$ itself when convenient. If $v$ is a descendant of a leaf-spot $w_i$, then instead set $L(v) = \{w_i\}$. \todo{Are we sure this last sentence is what want?} \todo{Yes, it's more of a hack than anything but it works}

\begin{comment}
\todo{This is part of the algorithm description}
Our algorithm begins by continually moving the car towards the root until there are no more parking spots available. Upon this moment, we instantiate an instance of the standard multiplicative weights online learning algorithm where we conceptually associate each expert with an $\sigma\in\{1, ..., d\}$ which recommends keeping $r_1$ on $T_\sigma$;
\TS\ parks $c$ on a random subtree with a probability distribution
carefully defined so that the probability that $c$ is on $T_\sigma$
%(with the right number of holes) 
is equal to the probability
that the multiplicative weights algorithm gives to this expert. 
\end{comment}

\subsection{Algorithm Description}
\label{subsec:treematch}

We start with some needed definitions and notation.

\begin{definition}
A parking spot $s_i$ in the collection  $ S$ of parking spots is a leaf-spot if there are no other parking spots in the
subtree rooted at $s_i$. 
Let
$L(T) = \{\ell_1, ..., \ell_d\}$ denote the collection of leaf-spots. 
Let $H$ be the maximum initial number of parking spots in $T$ on the path from the root $\rho$ to a leaf-spot in $L(T)$. 
For $\sigma\in  [d]$, define $T_\sigma\subseteq V$ as the set of parking spots on the path from the root $\rho$ to $\ell_\sigma$, inclusive. 
We define $T_\sigma$ to be alive if there is still an
in-commission parking spot in $T_\sigma$, and dead otherwise.
A $T_\sigma$ is killed by $r_t$ if $r_t$ 
 is the last parking spot to be decommissioned in
 $T_\sigma$. 
 Let $\Ac_t = \{\sigma\in[d] \ |\ \text{$T_\sigma$ is alive just before the arrival of $r_t$}\}$. 
For a vertex $v \in V$, let $L(v)$ denote the collection of leaf-spots that are descendants of $v$ in $T$.
Let $c_t$ be the location of the car just before the arrival
of request $r_t$.
\end{definition}

\noindent
{\bf Algorithm \TS :} The algorithm has two phases: the prologue phase and the core phase. The algorithm starts in the prologue phase and  transitions to the core phase after the first time $m$ when there is
no available parking space on the path from the new parking spot $c_{m+1}$ to the root $\rho$, inclusive. 
The algorithm then remains in the core phase until the end. In the prologue phase, 
whenever the car is not parked at a vertex with an in-commission parking spot, the following actions are taken:
\begin{enumerate}
    \item If there is an in-commission parking spot at   $c_t  $ then no action is taken. 
    \item Else if there is an in-commission parking spot on the path between $c_t$  and the root $\rho$, inclusive, then the car moves to the first in-commission parking spot on this path nearest to $c_t$.  
    \item 
    Else the car moves to the root $\rho$ and enters the core phase to determine where to go from there. So for analysis purposes, the movement to $\rho$ counts as being part of the prologue phase, and the rest of the movement counts as being in the core phase.  
\end{enumerate}
If the car is at the root $\rho$ and the algorithm is just transitioning into the core phase, then 
   a live $T_\tau$ is picked uniformly at random
    from $\Ac_{t+1}$, 
    an internal variable $\gamma$ is set to be $ \tau $, and the car moves to the first
    in-commission parking spot on the path from $\rho$ to $\ell_\tau$. 
Subsequently in the core phase, 
when a parking spot $r_t$ is decommissioned then:
\begin{enumerate}
    \item If the car is not parked at $r_t$, that is
    if $c_t \ne r_t$, then no action is taken. 
    \item
    Else the car moves to the first in-commission parking spot in $T_\tau$ with probability $q^{\gamma}_t(\tau)$ and sets 
    $\gamma$ to be  $ \tau$.  ( $q^{\gamma}_t(\tau)$ is defined in the next subsection.) 
\end{enumerate}
Intuitively $\gamma$ stores the last random choice of the algorithm.

\subsection{The Definition of $q^\sigma_t(\tau)$}
\label{subsec:qdefine}

In this section we only consider times in the core phase. 
We conceptually divide up the tree $T$ into three regions. Given vertex $v$ and time $t$, we let $z^v_t$ be the number of in-commission parking spots on the path from $v$ to $\rho$, inclusive, just before decommission $r_t$. We then define the regions as follows: 
\begin{enumerate}
    \item The \emph{root region} is the set of all vertices $v$ such that $z^v_t = 0$. Note that this region is connected, and no decommissioning can occur in this region since there are no parking spots left. 
    \item The \emph{frontier region} is the set of all vertices $v$ such that $z^v_t = 1$. A decommissioning $r_t$ is called a frontier decommissioning if $r_t$ is in the frontier region.
    \item The \emph{outer region} is the set of all vertices $v$ such that $z^v_t > 1$. A decommissioning $r_t$ is called a outer decommissioning if $r_t$ is in the outer region.
\end{enumerate} 
Observe that these regions  have no dependence on random events internal to the algorithm. Further observe that step 2 of the core phase in algorithm \TS\ maintains the invariant that the car is always parked at a spot in the frontier region. This means that any outer decommissionings will not move the car from its current parking spot.

\begin{definition}
Let $r_m$ be the last decommissioning handled in the prologue phase of \TS. Define $\Xc_t = \Ac_t\cap L(r_t)$
to be the collection of $\sigma$'s such that 
$T_{\sigma}$ is alive and contains  $r_t$ and define $\Yc_t = \Ac_t\setminus\Xc_t = \Ac_t\cap\overline{L(r_t)}$
to be the collection of $\sigma$'s such that 
$T_{\sigma}$ is alive and doesn't contain  $r_t$. Define $\mathcal{F}_t = \Xc_t \cap \overline{\Ac_{t+1}}$ to be the collection of $\sigma$'s such that $T_\sigma$ is killed by $r_t$.
Let $n_t^\sigma$ denote the number of frontier decommissionings strictly before time $t$ from $T_\sigma$.
Define $w^\sigma_t = (1-\epsilon)^{n^\sigma_t}$ for each $\sigma\in [d]$. 
Define $W_t(\mathcal{J}) = \sum_{\sigma\in \mathcal{J}} w^\sigma_t$ for any $\mathcal{J}\subseteq\{1, ..., d\}$. Define $\pi^\sigma_t$ as the probability the experts algorithm would give to expert $\sigma$, that is $\pi^\sigma_t = \frac{w^\sigma_t}{\sum_{\tau\in[d]}w^\tau_t}$. Define $\tilde{\pi}^\sigma_t$ as $\pi_t$ normalized amongst all experts in $\Ac_t$, that is $\tilde{\pi}^\sigma_t =  \frac{w^\sigma_t}{\sum_{\tau\in\Ac_t}w^\tau_t}$ if $\sigma\in\Ac_t$, and 0 otherwise. Define $p^\sigma_t$ as the probability that
$\gamma = \sigma$ right before time $t$. 
\end{definition}
We are now ready to define $q_t^\sigma(\tau)$. Note that by the definition of \TS, $q_t^\sigma(\tau)$ is only used for $\sigma \in \Xc_t$ since the algorithm only reaches step 2 of the core phase when $r_t \in T_\gamma$. 
We  show in Lemma \ref{lem:qsetting} that this definition of $q^{\sigma}_t(\tau)$ indeed defines a probability distribution over $\tau\in [d]$. 
We then show in Lemma \ref{lem:tminvariant}  that
the definition of $q^{\sigma}_t(\tau)$ guarantees that our desired invariant $p^\sigma_t = \tilde{\pi}^\sigma_t$ holds. 

\begin{definition}
\label{defn: qsetting}
\begin{equation*}%\label{eq: qsetting}
q^{\sigma}_t(\tau) = \begin{cases}
 \frac{\epsilon w^\tau_t}{(1-\epsilon)W_t(\Xc_t \setminus \mathcal{F}_t) + W_t(\Yc_t)} & \text{if } \tau\in\Yc_t\text{ and $\sigma\in \Xc_t \setminus \mathcal{F}_t$}\\[1em]
 \frac{w^\tau_t}{(1-\epsilon)W_t(\Xc_t \setminus \mathcal{F}_t) + W_t(\Yc_t)} & \text{if } \tau\in\Yc_t\text{ and $\sigma\in \mathcal{F}_t$}\\ \\
 \frac{1-\sum_{\ss \in \Yc_t}q^{\sigma}_t(\ss)}{|\Xc_t\setminus{\mathcal{F}_t}|} & \text{if } \tau\in\Xc_t\setminus \mathcal{F}_t\\
 0 & \text{if } \tau\in \mathcal{F}_t \text{ or } \tau \in \overline{\Ac_t}
\end{cases}\end{equation*}
\end{definition}

\begin{lemma}\label{lem:qsetting}
For all times $t$ in the core phase and for all $\sigma \in \Xc_t$,
 $q^\sigma_t(\tau)$ forms a distribution over $\tau\in [d]$.
 \end{lemma}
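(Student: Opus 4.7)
The plan is to verify the two defining properties of a distribution: (i) nonnegativity of every $q^\sigma_t(\tau)$, and (ii) $\sum_{\tau\in[d]} q^\sigma_t(\tau)=1$. Before diving into cases, I would first justify that the common denominator $(1-\epsilon)W_t(\Xc_t\setminus\mathcal{F}_t)+W_t(\Yc_t)$ is strictly positive whenever the lemma is nonvacuously invoked. Since we are in the core phase and considering $\sigma\in\Xc_t$, the algorithm has not yet terminated, so $\Ac_{t+1}=\Yc_t\cup(\Xc_t\setminus\mathcal{F}_t)$ is nonempty; hence at least one of $W_t(\Yc_t)$ or $W_t(\Xc_t\setminus\mathcal{F}_t)$ is positive and the denominator is well-defined.

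For nonnegativity, the cases $\tau\in\mathcal{F}_t\cup\overline{\Ac_t}$ give $0$, and the cases $\tau\in\Yc_t$ are manifestly nonnegative since all weights $w^\tau_t$ are nonnegative. The only delicate case is $\tau\in\Xc_t\setminus\mathcal{F}_t$, which requires verifying $\sum_{\ss\in\Yc_t} q^\sigma_t(\ss)\le 1$. I would split on whether $\sigma\in\Xc_t\setminus\mathcal{F}_t$ or $\sigma\in\mathcal{F}_t$ and in each case observe that this sum simplifies to $\frac{\epsilon W_t(\Yc_t)}{(1-\epsilon)W_t(\Xc_t\setminus\mathcal{F}_t)+W_t(\Yc_t)}$ or $\frac{W_t(\Yc_t)}{(1-\epsilon)W_t(\Xc_t\setminus\mathcal{F}_t)+W_t(\Yc_t)}$ respectively; both are at most $1$ as long as $\epsilon\in[0,1]$, which is a standard assumption for multiplicative weights.

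For the sum-to-one property, I would partition $[d]=(\Xc_t\setminus\mathcal{F}_t)\sqcup\mathcal{F}_t\sqcup\Yc_t\sqcup\overline{\Ac_t}$ and compute each piece. The $\mathcal{F}_t\cup\overline{\Ac_t}$ piece contributes $0$ by definition. When $\Xc_t\setminus\mathcal{F}_t\neq\emptyset$, the third branch of the definition is specifically engineered so that summing $|\Xc_t\setminus\mathcal{F}_t|$ identical copies of $\frac{1-\sum_{\ss\in\Yc_t}q^\sigma_t(\ss)}{|\Xc_t\setminus\mathcal{F}_t|}$ yields exactly $1-\sum_{\ss\in\Yc_t}q^\sigma_t(\ss)$, which combined with the $\Yc_t$ contribution gives $1$ telescopically.

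The only genuine obstacle is the edge case $\Xc_t\setminus\mathcal{F}_t=\emptyset$, where the third branch is vacuous and the construction-by-complement trick no longer applies; here I must separately verify that the $\Yc_t$ contributions themselves sum to $1$. In this case $\sigma\in\Xc_t$ forces $\sigma\in\mathcal{F}_t$, so the second branch applies and $\sum_{\tau\in\Yc_t}q^\sigma_t(\tau)=\frac{W_t(\Yc_t)}{0+W_t(\Yc_t)}=1$, which closes the proof. I expect this corner case to be the most error-prone step, since it is exactly the place where the denominator could degenerate, and it has to be handled consistently with the nonemptiness of $\Ac_{t+1}$ to ensure $W_t(\Yc_t)>0$.
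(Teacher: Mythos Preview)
Your proposal is correct and follows essentially the same approach as the paper: partition $[d]$, verify nonnegativity by bounding $\sum_{\ss\in\Yc_t}q^\sigma_t(\ss)\le 1$ in the two cases for $\sigma$, and obtain sum-to-one via the telescoping built into the third branch of Definition~\ref{defn: qsetting}. You are in fact slightly more careful than the paper in two places: you explicitly argue positivity of the denominator, and you treat the corner case $\Xc_t\setminus\mathcal{F}_t=\emptyset$ separately (the paper's computation $|\Xc_t\setminus\mathcal{F}_t|\cdot\frac{1-\sum_{\ss}q^\sigma_t(\ss)}{|\Xc_t\setminus\mathcal{F}_t|}$ tacitly assumes this set is nonempty).
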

\begin{proof} 
First, note that the cases of Definition $\ref{defn: qsetting}$ partition $[d]$ since 
\begin{align*}
    [d] &= \Ac_t \cup \overline{\Ac_t}  & \text{by definition of $\Ac_t$}\\
    &= (\Xc_t \cup \Yc_t) \cup \overline{\Ac_t} & \text{by definition of $\Xc_t$, $\Yc_t$}\\
    &= ((\Xc_t \setminus \mathcal{F}_t) \cup \mathcal{F}_t \cup \Yc_t) \cup \overline{\Ac_t} & \text{by definition of $\mathcal{F}_t$}
\end{align*}
Next, we show that the distribution sums to 1.
\begin{align*}
    \sum_{\tau \in [d]} q^\sigma_t(\tau) &= \sum_{\tau \in \Xc_t \setminus \mathcal{F}_t}q^\sigma_t(\tau) + \sum_{\tau \in \Yc_t}q^\sigma_t(\tau) & \text{since $q^\sigma_t(\tau) = 0$ for $\tau \in \mathcal{F}_t$ or $\tau \in \overline{\Ac_t}$} \\
    &= \sum_{\tau \in \Xc_t \setminus \mathcal{F}_t} \frac{1-\sum_{\ss \in \Yc_t}q^{\sigma}_t(\ss)}{|\Xc_t\setminus{\mathcal{F}_t}|} + \sum_{\tau \in \Yc_t}q_t^\sigma(\tau) & \text{By Defn. 15} \\
    &= |\Xc_t\setminus{\mathcal{F}_t}|\frac{1-\sum_{\ss \in \Yc_t}q^{\sigma}_t(\ss)}{|\Xc_t\setminus{\mathcal{F}_t}|} + \sum_{\tau \in \Yc_t}q_t^\sigma(\tau) \\
    &= 1
\end{align*}
Finally, we show that every event has non-negative probability.  Trivially, $q_t^\sigma(\tau) \geq 0$ for any $\tau \in [d] \setminus (\Xc_t \setminus \mathcal{F}_t)$ by Definition $\ref{defn: qsetting}$.  For $\tau \in \Xc_t \setminus \mathcal{F}_t$, $q_t^\sigma(\tau) = \frac{1-\sum_{\ss \in \Yc_t}q^{\sigma}_t(\ss)}{|\Xc_t\setminus{\mathcal{F}_t}|}$, so it is sufficient to show that $\sum_{\tau \in \Yc_t}q^{\sigma}_t(\tau) \leq 1$.  If $\sigma \in X_t \setminus \mathcal{F}_t$, then
\begin{align*}
    \sum_{\tau \in \Yc_t}q^{\sigma}_t(\tau) &= \sum_{\tau \in \Yc_t}\frac{\epsilon w^\tau_t}{(1-\epsilon)W_t(\Xc_t \setminus \mathcal{F}_t) + W_t(\Yc_t)} & \text{by Defn. $\ref{defn: qsetting}$} \\
    &\leq \frac{\epsilon}{W_t(\Yc_t)}   \sum_{\tau \in \Yc_t} w^\tau_t & \text{ since $(1-\epsilon)W_t(\Xc_t \setminus \mathcal{F}_t) \geq 0$} \\
    &= \frac{\epsilon W_t(\Yc_t)}{W_t(\Yc_t)} & \text{by definition of $\Yc_t$} \\
    &= \epsilon \\ &\leq 1
\end{align*}

If $\sigma \in \mathcal{F}_t$, the only difference from the previous case is that the constant $\epsilon$ is replaced with a 1 in Definition $\ref{defn: qsetting}$, so the analysis still holds.

\end{proof}
Before proving that the invariant $p^\sigma_t = \tilde{\pi}_t^\sigma$ holds using distribution $q$, we prove some necessary properties about the set $X_t$.
\begin{lemma}\label{lem:xtsubset}
    Let $X \subseteq \Xc_t$ be an arbitrary subset of $\Xc_t$ and let $\sigma \in \Xc_t$ be any parking spot, we then have that:
\begin{enumerate}
    \item  $w_t^\sigma = \frac{W_t(X)}{|X|}$
    \item $\sum\limits_{\tau \in X}\tilde{\pi}_t^\tau = \frac{W_t(X)}{W_t(\Ac_t)}$
\end{enumerate}
\end{lemma}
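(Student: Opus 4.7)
The plan is to prove both claims by first establishing the key structural observation that all weights $w_t^\sigma$ with $\sigma \in \Xc_t$ are equal at time $t$. Once this equality is in hand, claim 1 is immediate averaging, and claim 2 follows directly from the definition of $\tilde{\pi}_t^\tau$. Recall that $w_t^\sigma = (1-\epsilon)^{n_t^\sigma}$, so the task reduces to showing $n_t^\sigma$ is independent of the choice of $\sigma \in \Xc_t$.

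The structural argument proceeds as follows. Fix $\sigma \in \Xc_t$. By definition of $\Xc_t$, we have $r_t \in T_\sigma$, so $T_\sigma$ contains the entire path from $\rho$ to $r_t$ plus some further spots descending from $r_t$ toward $\ell_\sigma$. Consider any past frontier decommissioning at time $s < t$ whose spot $r_s$ lies on $T_\sigma$. Since $r_s$ and $r_t$ both lie on the same root-to-leaf path, one is an ancestor of the other. I claim $r_s$ must be a strict ancestor of $r_t$. Suppose instead that $r_s$ were a proper descendant of $r_t$ on $T_\sigma$. Then $r_t$ would lie on the path from $r_s$ to $\rho$, and because $r_t$ is only decommissioned at time $t > s$, it is still in commission at time $s$. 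This makes $r_t$ a second in-commission parking spot on the path from $r_s$ to $\rho$ (distinct from $r_s$ itself), contradicting $z_s^{r_s} = 1$, which must hold for $r_s$ to be a frontier decommissioning. Hence every past frontier decommissioning on $T_\sigma$ lies on the path from $\rho$ to $r_t$, a set of spots common to every $T_\sigma$ with $\sigma \in \Xc_t$. Consequently $n_t^\sigma$ depends only on the history and on $r_t$, not on the particular $\sigma \in \Xc_t$.

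With $n_t^\sigma$ constant across $\sigma \in \Xc_t$, say equal to $n$, we obtain $w_t^\sigma = (1-\epsilon)^n$ for every $\sigma \in \Xc_t$. For any $X \subseteq \Xc_t$, summing gives $W_t(X) = |X|\cdot (1-\epsilon)^n = |X|\cdot w_t^\sigma$, which rearranges to claim 1. For claim 2, note $X \subseteq \Xc_t \subseteq \Ac_t$, so by the definition of $\tilde{\pi}_t^\tau$ we have $\tilde{\pi}_t^\tau = w_t^\tau / W_t(\Ac_t)$ for each $\tau \in X$, and summing yields $\sum_{\tau \in X} \tilde{\pi}_t^\tau = W_t(X)/W_t(\Ac_t)$. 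The main obstacle is the structural step about descendants of $r_t$; once one sees that $r_t$'s own presence blocks any descendant from ever being in the frontier before time $t$, both claims drop out essentially by definition.
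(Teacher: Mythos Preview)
Your proof is correct and follows essentially the same approach as the paper: both argue that $n_t^\sigma$ (hence $w_t^\sigma$) is constant over $\sigma\in\Xc_t$ because the paths $T_\sigma$ all coincide from $\rho$ down to $r_t$, and then derive claims 1 and 2 directly from this equality and the definition of $\tilde\pi_t^\tau$. Your argument is in fact slightly more explicit than the paper's, since you spell out why no frontier decommissioning before time $t$ can occur at a descendant of $r_t$ (the presence of the still-in-commission spot $r_t$ would force $z_s^{r_s}\ge 2$), a point the paper leaves implicit.
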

\begin{proof}
  Given decommissioning $r_t$, let $\sigma_i, \sigma_j \in \Xc_t$ be arbitrary spots. Then, by definition of $\Xc_t$, $T_{\sigma_i}$ and $T_{\sigma_j}$ overlap at $r_t$ as well as every parking spot from $r_t$ to the root.  So, any frontier decommissioning before $r_t$ that decommissioned a spot from $T_{\sigma_i}$ also decommissioned a spot from $T_{\sigma_j}$.  In other words, $n_t^{\sigma_i} = n_t^{\sigma_j}$ and $w_t^{\sigma_i} = w_t^{\sigma_j}$.  Since $\sigma_i$ and $\sigma_j$ were arbitrary from $\Xc_t$, for any subset $X \subseteq \Xc_t$ and $\sigma \in \Xc_t$,
  \begin{align*}
     W_t(X) &= \sum_{\tau \in X} w_t^\tau = |X|w_t^\sigma 
  \end{align*}
  Property 2 comes just from the definition of $\tilde{\pi}_t^\sigma$:
  \begin{align*}
      \sum_{\tau \in X} \tilde{\pi}_t^\tau &= \frac{\sum_{\tau \in X} w_t^\tau}{\sum_{\tau \in \Ac_t} w_t^\tau} & \text{by defintion of $\tilde{\pi}_t^\tau$} \\
      &= \frac{W_t(X)}{W_t(\Ac_t)} & \text{by definition of $W_t(X)$, $W_t(\Ac_t)$}
  \end{align*}
\end{proof}

\begin{lemma}\label{lem:tminvariant}
For all times $t$ during the core phase and
for all $\sigma \in [d]$,   $p^\sigma_t = \tilde{\pi}^\sigma_t$.
\end{lemma}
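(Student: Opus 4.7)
The plan is to do induction on $t \ge m+1$. For the base case $t=m+1$: the algorithm has just chosen $\gamma$ uniformly from $\Ac_{m+1}$, so $p^\sigma_{m+1}=1/|\Ac_{m+1}|$ on $\Ac_{m+1}$ and $0$ off it; and since no core-phase frontier decommissioning has yet occurred, every $n^\sigma_{m+1}=0$ and $w^\sigma_{m+1}=1$, which gives the same uniform distribution for $\tilde{\pi}^\sigma_{m+1}$.

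For the inductive step, assume $p^\sigma_t = \tilde{\pi}^\sigma_t$ and examine the effect of $r_t$. If $r_t$ is an outer decommissioning ($z^{r_t}_t > 1$), then $r_t$ cannot kill any $T_\sigma$ (since killing $T_\sigma$ would force $z^{r_t}_t = 1$), no $n^\sigma$ increments, and by step 1 of the core phase the car does not move, so both sides are unchanged. If $r_t$ is a frontier decommissioning, then because the car always sits at the topmost in-commission spot of $T_\gamma$, a short argument shows $c_t = r_t \iff \gamma \in \Xc_t$. Case 2 of the algorithm therefore fires exactly when $\gamma \in \Xc_t$ (in which case $\gamma$ is resampled from $q^\gamma_t$), while case 1 leaves $\gamma$ untouched. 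Using the law of total probability together with the inductive hypothesis,
\begin{equation*}
p^\tau_{t+1} \;=\; \mathbf{1}[\tau \in \Yc_t]\,\tilde{\pi}^\tau_t \;+\; \sum_{\sigma \in \Xc_t}\tilde{\pi}^\sigma_t\,q^\sigma_t(\tau),
\end{equation*}
and for $\tau\in\mathcal{F}_t\cup\overline{\Ac_t}$ every term is zero, matching $\tilde{\pi}^\tau_{t+1}=0$.

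To finish, I would split $\sum_{\sigma\in\Xc_t}$ into its $\Xc_t\setminus\mathcal{F}_t$ and $\mathcal{F}_t$ parts (each giving a different piece of $q^\sigma_t(\tau)$ from Definition \ref{defn: qsetting}), apply Lemma \ref{lem:xtsubset} to replace $\sum_{\sigma\in X}\tilde{\pi}^\sigma_t = W_t(X)/W_t(\Ac_t)$ for each $X\in\{\Xc_t\setminus\mathcal{F}_t,\mathcal{F}_t\}$, and use the identities $W_t(\Ac_t) = W_t(\Xc_t\setminus\mathcal{F}_t) + W_t(\mathcal{F}_t) + W_t(\Yc_t)$ and $W_{t+1}(\Ac_{t+1}) = (1-\epsilon)W_t(\Xc_t\setminus\mathcal{F}_t) + W_t(\Yc_t)$. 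In both the $\tau\in\Yc_t$ case (where the $\epsilon$-piece, the $1$-piece, and the standalone $\tilde{\pi}^\tau_t$ combine) and the $\tau\in\Xc_t\setminus\mathcal{F}_t$ case (where the uniform residual $(1-\sum_\ss q^\sigma_t(\ss))/|\Xc_t\setminus\mathcal{F}_t|$ produces the correct $(1-\epsilon)$ factor), the sums telescope to $w^\tau_{t+1}/W_{t+1}(\Ac_{t+1})=\tilde{\pi}^\tau_{t+1}$. The main technical obstacle is precisely this telescoping: the two different scalings of $q^\sigma_t(\tau)$ for $\sigma \in \Xc_t\setminus\mathcal{F}_t$ versus $\sigma \in \mathcal{F}_t$, along with the uniform residual, have to be calibrated so that they and the $(1-\epsilon)$ weight discount combine exactly to reproduce the updated multiplicative-weights distribution; Lemma \ref{lem:xtsubset}'s observation that $w^\sigma_t$ is constant on $\Xc_t$ is what lets the sums over $\Xc_t\setminus\mathcal{F}_t$ and $\mathcal{F}_t$ reduce to clean proportions of $W_t$, making the identity work.
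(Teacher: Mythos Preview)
Your proposal is correct and follows essentially the same approach as the paper: induction on $t$, a separate (trivial) case for outer decommissionings, and for frontier decommissionings the same case split on the target ($\tau\in\Yc_t$, $\tau\in\Xc_t\setminus\mathcal{F}_t$, $\tau\in\mathcal{F}_t$) with the sum over $\sigma\in\Xc_t$ broken into $\Xc_t\setminus\mathcal{F}_t$ and $\mathcal{F}_t$, Lemma~\ref{lem:xtsubset} applied to collapse those sums, and the identity $W_{t+1}(\Ac_{t+1})=(1-\epsilon)W_t(\Xc_t\setminus\mathcal{F}_t)+W_t(\Yc_t)$ used to recognize $\tilde{\pi}^\tau_{t+1}$. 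The only cosmetic difference is that the paper carries out the algebra in each case explicitly (its equations (\ref{eq: xsum1})--(\ref{eq: ysum2})), whereas you package everything into the single law-of-total-probability identity $p^\tau_{t+1}=\mathbf{1}[\tau\in\Yc_t]\,\tilde{\pi}^\tau_t+\sum_{\sigma\in\Xc_t}\tilde{\pi}^\sigma_t q^\sigma_t(\tau)$ and then describe how it telescopes.
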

\begin{proof}
The proof is  by induction on the time $t$. 
The transition from the prologue to core phases 
of \TS\ guarantees that this invariant is initially true.
Now assuming  $p^\sigma_t = \tilde{\pi}^\sigma_t$ for all $\sigma$, we want to show that $p^\sigma_{t+1} = \tilde{\pi}^\sigma_{t+1}$ for all $\sigma$. 
The proof is broken into cases.

In the first case, assume $\sigma\in\Xc_t \setminus \mathcal{F}_t$ and $r_t$ is a  frontier decommissioning.
We will begin with two equations:
\begin{equation}\label{eq: xsum1}
\begin{aligned}
  \sum\limits_{\tau \in \Xc_t \setminus \mathcal{F_T}}&p^\tau_t\frac{1-\sum_{\ss \in \Yc_t}q^{\tau}_t(\ss)}{|\Xc_t\setminus{\mathcal{F}_t}|} \\&=  \sum\limits_{\tau \in \Xc_t \setminus \mathcal{F_T}}p^\tau_t\frac{1-\sum_{\ss \in \Yc_t}\frac{\epsilon w^\ss_t}{(1-\epsilon)W_t(\Xc_t \setminus \mathcal{F}_t) + W_t(\Yc_t)}}{|\Xc_t\setminus{\mathcal{F}_t}|} & \text{by Defn. $\ref{defn: qsetting}$} \\
  &= \sum\limits_{\tau \in \Xc_t \setminus \mathcal{F_T}}p^\tau_t\frac{1-\frac{\epsilon W_t(\Yc_t)}{(1-\epsilon)W_t(\Xc_t \setminus \mathcal{F}_t) + W_t(\Yc_t)}}{|\Xc_t\setminus{\mathcal{F}_t}|} & \text{by defn. of $W_t(\Yc_t)$} \\
  &= \sum\limits_{\tau \in \Xc_t \setminus \mathcal{F_T}}p^\tau_t\frac{(1-\epsilon)W_t(\Ac_t \setminus \mathcal{F}_t)}{|\Xc_t\setminus{\mathcal{F}_t}|((1-\epsilon)W_t(\Xc_t \setminus \mathcal{F}_t) + W_t(\Yc_t))} & \text{by defn. of $\Ac_t, \mathcal{F}_t$} \\
  &= \frac{(1-\epsilon)W_t(\Ac_t \setminus \mathcal{F}_t)}{|\Xc_t\setminus{\mathcal{F}_t}|((1-\epsilon)W_t(\Xc_t \setminus \mathcal{F}_t) + W_t(\Yc_t))}\sum\limits_{\tau \in \Xc_t \setminus \mathcal{F_T}}\tilde{\pi}^\tau_t & p_t^\tau=\tilde{\pi}_t^\tau\\
  &=  \frac{(1-\epsilon)W_t(\Xc_t \setminus \mathcal{F}_t)W_t(\Ac_t \setminus \mathcal{F}_t)}{|\Xc_t\setminus{\mathcal{F}_t}|W_t(\Ac_t)((1-\epsilon)W_t(\Xc_t \setminus \mathcal{F}_t) + W_t(\Yc_t))} & \text{by Lemma $\ref{lem:xtsubset}$ part 2}
\end{aligned}
\end{equation}

and similarly,
\begin{equation}\label{eq: xsum2}
    \begin{aligned}
      \sum\limits_{\tau \in \mathcal{F_T}}&p^\tau_t\frac{1-\sum_{\ss \in \Yc_t}q^{\tau}_t(\ss)}{|\Xc_t\setminus{\mathcal{F}_t}|} \\&=  \sum\limits_{\tau \in \mathcal{F_T}}p^\tau_t\frac{1-\sum_{\ss \in \Yc_t}\frac{w^\ss_t}{(1-\epsilon)W_t(\Xc_t \setminus \mathcal{F}_t) + W_t(\Yc_t)}}{|\Xc_t\setminus{\mathcal{F}_t}|} & \text{by Defn. $\ref{defn: qsetting}$} \\
  &= \sum\limits_{\tau \in \mathcal{F_T}}p^\tau_t\frac{1-\frac{W_t(\Yc_t)}{(1-\epsilon)W_t(\Xc_t \setminus \mathcal{F}_t) + W_t(\Yc_t)}}{|\Xc_t\setminus{\mathcal{F}_t}|} & \text{by defn. of $W_t(\Yc_t)$} \\
  &= \sum\limits_{\tau \in \mathcal{F_T}}p^\tau_t\frac{(1-\epsilon)W_t(\Xc_t \setminus \mathcal{F}_t)}{|\Xc_t\setminus{\mathcal{F}_t}|((1-\epsilon)W_t(\Xc_t \setminus \mathcal{F}_t) + W_t(\Yc_t))} \\
  &= \frac{(1-\epsilon)W_t(\Xc_t \setminus \mathcal{F}_t)}{|\Xc_t\setminus{\mathcal{F}_t}|((1-\epsilon)W_t(\Xc_t \setminus \mathcal{F}_t) + W_t(\Yc_t))}\sum\limits_{\tau \in \mathcal{F_T}}\tilde{\pi}^\tau_t & p_t^\tau=\tilde{\pi}_t^\tau\\
  &=  \frac{(1-\epsilon)W_t(\Xc_t \setminus \mathcal{F}_t)W_t(\mathcal{F}_t)}{|\Xc_t\setminus{\mathcal{F}_t}|W_t(\Ac_t)((1-\epsilon)W_t(\Xc_t \setminus \mathcal{F}_t) + W_t(\Yc_t))} & \text{by Lemma $\ref{lem:xtsubset}$ part 2}
    \end{aligned}
\end{equation}
Now, notice that by the definition of $q_t^\sigma(\tau)$, 
if $\gamma = \sigma \in\Xc_t$ after $r_t$, then it must have been the case that $\gamma \in \Xc_t$ before $r_t$, and thus:
\begin{align*}p^\sigma_{t+1} &=\sum\limits_{\tau \in \Xc_t}{p}_t^\tau q_t^\tau(\sigma) \\
&= \sum\limits_{\tau \in \Xc_t}p^\tau_t\frac{1-\sum_{\ss \in \Yc_t}q^{\tau}_t(\ss)}{|\Xc_t\setminus{\mathcal{F}_t}|} & \text{by Defn. $\ref{defn: qsetting}$} \\
&= \sum\limits_{\tau \in \Xc_t \setminus \mathcal{F_T}}p^\tau_t\frac{1-\sum_{\ss \in \Yc_t}q^{\tau}_t(\ss)}{|\Xc_t\setminus{\mathcal{F}_t}|} + \sum\limits_{\tau \in \mathcal{F}_t}p^\tau_t\frac{1-\sum_{\ss \in \Yc_t}q^{\tau}_t(\ss)}{|\Xc_t\setminus{\mathcal{F}_t}|} \\
&= \frac{(1-\epsilon)W_t(\Xc_t \setminus \mathcal{F}_t)\left(W_t(\Ac_t \setminus \mathcal{F}_t) + W_t(\mathcal{F}_t)\right)}{|\Xc_t\setminus{\mathcal{F}_t}|W_t(\Ac_t)((1-\epsilon)W_t(\Xc_t \setminus \mathcal{F}_t) + W_t(\Yc_t))} & \text{by Eq. $\ref{eq: xsum1}$ and $\ref{eq: xsum2}$}\\
&= \frac{(1-\epsilon) W_t(\Xc_t \setminus \mathcal{F}_t)}{|\Xc_t\setminus{\mathcal{F}_t}|((1-\epsilon)W_t(\Xc_t \setminus \mathcal{F}_t) + W_t(\Yc_t))} & \text{by definition of $W(\Ac_t)$} \\
&=  \frac{(1-\epsilon) w_t^\sigma}{(1-\epsilon)W_t(\Xc_t \setminus \mathcal{F}_t) + W_t(\Yc_t)} & \text{by Lemma $\ref{lem:xtsubset}$ part 1} \\
&= \tilde{\pi}_{t+1}^\sigma & \text{by definition of $\tilde{\pi}_{t+1}^\sigma$}
\end{align*}

In the second case, assume that  $\sigma \in \Yc_t$ and $r_t$ is a  frontier decommissioning. We will begin with three equations:
\begin{equation}\label{pieq2}
    \begin{aligned}
\tilde{\pi}^\sigma_{t+1}-\tilde{\pi}^\sigma_t &= \frac{w^\sigma_t}{(1-\epsilon)W_t(\Xc_t \setminus \mathcal{F}_t) + W_t(\Yc_t)} - \frac{w^\sigma_t}{W_t(\Xc_t) + W_t(\Yc_t)} & \text{by defn. of } \pi_{t}^\sigma\\
&= \frac{(W_t(\mathcal{F}_t) + \epsilon\ W_t(\Xc_t \setminus \mathcal{F}_t))w^\sigma_t}{(W_t(\Xc_t)+W_t(\Yc_t))((1-\epsilon)W_t(\Xc_t) + W_t(\Yc_t))}\\
&= {\tilde{\pi}}_t^\sigma\frac{W_t(\mathcal{F}_t) + \epsilon\ W_t(\Xc_t \setminus \mathcal{F}_t)}{(1-\epsilon)W_t(\Xc_t) + W_t(\Yc_t)} & \text{by defn. of $\tilde{\pi}_t^\sigma$}\\
&= {p}_t^\sigma\frac{W_t(\mathcal{F}_t) + \epsilon\ W_t(\Xc_t \setminus \mathcal{F}_t)}{(1-\epsilon)W_t(\Xc_t) + W_t(\Yc_t)} & \text{as $p_t^\sigma = \tilde{\pi}^\sigma_t$}.
\end{aligned}
\end{equation}
\begin{equation}\label{eq: ysum1}
\begin{aligned}
\sum_{\tau\in\Xc_t\setminus\mathcal{F}_t}\tilde{\pi}_t^\tau q_t^\tau(\sigma) &= \sum_{\tau\in\Xc_t\setminus\mathcal{F}_t}\tilde{\pi}_t^\tau\frac{\epsilon w^\sigma_t}{(1-\epsilon)W_t(\Xc_t \setminus \mathcal{F}_t) + W_t(\Yc_t)} & \text{by Defn. $\ref{defn: qsetting}$} \\
&= \frac{\epsilon w^\sigma_t}{(1-\epsilon)W_t(\Xc_t \setminus \mathcal{F}_t) + W_t(\Yc_t)}\sum_{\tau\in\Xc_t\setminus\mathcal{F}_t}\tilde{\pi}_t^\tau \\
&= \frac{\epsilon w^\sigma_t}{(1-\epsilon)W_t(\Xc_t \setminus \mathcal{F}_t) + W_t(\Yc_t)} \left(\frac{W_t(\Xc_t \setminus \mathcal{F}_t)}{W_t(\Ac_t)}\right) & \text{by Lemma $\ref{lem:xtsubset}$ part 2} \\
&= \tilde{\pi}_t^\sigma \frac{\epsilon W_t(\Xc_t \setminus \mathcal{F}_t)}{(1-\epsilon)W_t(\Xc_t \setminus \mathcal{F}_t) + W_t(\Yc_t)} & \text{by defn. of $\tilde{\pi}_t^\sigma$} \\
&= p_t^\sigma  \frac{\epsilon W_t(\Xc_t \setminus \mathcal{F}_t)}{(1-\epsilon)W_t(\Xc_t \setminus \mathcal{F}_t) + W_t(\Yc_t)} & \text{as $p_t^\sigma = \tilde{\pi}_t^\sigma$}
\end{aligned}
\end{equation}
and similarly,
\begin{equation}\label{eq: ysum2}
\begin{aligned}
\sum_{\tau\in \mathcal{F}_t}\tilde{\pi}_t^\tau q_t^\tau(\sigma) &= \sum_{\tau\in\mathcal{F}_t}\tilde{\pi}_t^\tau\frac{ w^\sigma_t}{(1-\epsilon)W_t(\Xc_t \setminus \mathcal{F}_t) + W_t(\Yc_t)} & \text{by Defn. $\ref{defn: qsetting}$} \\
&= p_t^\sigma  \frac{W_t( \mathcal{F}_t)}{(1-\epsilon)W_t(\Xc_t \setminus\mathcal{F}_t) + W_t(\Yc_t)}
\end{aligned}
\end{equation}

If right after $r_t$ it was the case that $\gamma  = \sigma \in \Yc_t$, then it must
be the case that right before $r_t$
either $\gamma = \sigma$  and the car did not move
at time $t$, or the car moved and $\gamma$ was updated. This gives us
$$p^\sigma_{t+1} =p^\sigma_t+ \sum_{\tau\in\Xc_t}{p}_t^{\tau}q^\tau_t(\sigma)$$ or equivalently:
\begin{align*}
    &p^\sigma_{t+1}-p^\sigma_t = \sum_{\tau\in\Xc_t}{p}_t^{\tau}q^\tau_t(\sigma) \\
    &=\sum_{\tau\in\Xc_t}\tilde{\pi}^\tau_t q^\tau_t(\sigma) & p_t^\tau = \tilde{\pi}_t^\tau\\
    &= \sum_{\tau\in\Xc_t\setminus\mathcal{F}_t}\tilde{\pi}^\tau_t q^\tau_t(\sigma) + \sum_{\tau\in \mathcal{F}_t}\tilde{\pi}^\tau_t q^\tau_t(\sigma)\\
    &= p_t^\sigma  \frac{(\epsilon W_t(\Xc_t \setminus \mathcal{F}_t) + W_t(\mathcal{F}_t))}{(1-\epsilon)W_t(\Xc_t \setminus \mathcal{F}_t) + W_t(\Yc_t)} & \text{by Eq. $\ref{eq: ysum1}$ and $\ref{eq: ysum2}$} \\
    &= \tilde{\pi}^\sigma_{t+1}-\tilde{\pi}^\sigma_t & \text{by Eq. $\ref{pieq2}$}
\end{align*}
Hence we can conclude that 
$p_{t+1}^\sigma = \tilde{\pi}_{t+1}^\sigma$.

In the third case, assume $\sigma \in \mathcal{F}_t$ and $r_t$ is a frontier decommissioning.  By definition of $\mathcal{F}_t$, $T_\sigma$ has been killed by $r_t$ and so $p_{t+1}^\sigma = 0$.  Similarly, since $\sigma \notin \mathcal{A}_{t+1}$, $\tilde{\pi}_{t+1}^\sigma = 0 = p_{t+1}^\sigma$.

In the last case assume that  $r_t$ is an outer decommissioning. 
Then $r_t$ does not increase $n^\sigma_t$ for any $\sigma\in[d]$, so $\tilde{\pi}^\sigma_t = \tilde{\pi}^\sigma_{t+1}$. Moreover, by the invariant that we also keep the car at a parking spot in the frontier, $r_t$  cannot decommission the parking spot at $c_t$, and thus the car does not move. Thus 
$p^\sigma_{t+1} = p^\sigma_t $.
\end{proof}

\subsection{Cost Analysis}
\label{subsec:treematchcost}
Definition \ref{defn: qsetting} with Lemmas \ref{lem:qsetting} and \ref{lem:tminvariant} give us the following bound on the cost:

\begin{theorem}\label{thm:MSUnit}
During the prologue phase, 
$\sum_{t=1}^m \mathbf{1}^{\mathsf{TM}}(t)\leq H$
and during the core phase, 
$$\textbf{E}\left[\sum_{t=m+1}^{k-1}\mathbf{1}^{\mathsf{TM}}(t)\right]\leq (1+\epsilon)H + \frac{\ln{d}}{\epsilon}$$
where $\mathbf{1}^{\mathsf{TM}}(t)$ is an indicator random variable that is 1 if \TS\ moves the car to  a new parking spot on the decommissioning $r_t$ and $0$ otherwise.\end{theorem}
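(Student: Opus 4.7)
For the prologue bound, I would simply observe that the prologue rule of \TS\ moves the car strictly toward the root at each step, traveling along the unique root-path from its current location; since that path initially contains at most $H$ parking spots by the definition of $H$, at most $H$ prologue movements can occur (counting the final move into $\rho$ that triggers the transition into the core phase), yielding the deterministic bound $\sum_{t=1}^{m}\mathbf{1}^{\mathsf{TM}}(t)\leq H$.

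For the core-phase bound, my plan is a multiplicative-weights potential argument with $\Phi_t := W_t(\Ac_t)$. First I would observe that outer decommissionings are inert: \TS\ keeps $c_t$ in the frontier, so $r_t \ne c_t$ there, while $n_t^\sigma$, every $w_t^\sigma$, and $\Phi_t$ are unchanged; hence only frontier decommissionings need be summed. Next, I would verify that $c_t = r_t$ holds exactly when $\gamma \in \Xc_t$, because for any such $\sigma$ each parking spot strictly above the frontier spot $r_t$ on the path from $\rho$ to $\ell_\sigma$ is already decommissioned, making $r_t$ the first in-commission spot on that path. Combined with Lemma~\ref{lem:tminvariant} and Lemma~\ref{lem:xtsubset} part 2, this identifies the movement probability as
\[
  \Pr\left[\text{car moves at time }t\right]\;=\;\sum_{\sigma\in\Xc_t}\tilde\pi_t^\sigma\;=\;\frac{W_t(\Xc_t)}{\Phi_t}.
\]

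Next I would compute the per-step drop of $\Phi_t$. Since $\Ac_{t+1} = \Ac_t \setminus \Fc_t$, with $w_{t+1}^\sigma = (1-\epsilon)w_t^\sigma$ for $\sigma \in \Xc_t\setminus\Fc_t$ and $w_{t+1}^\sigma = w_t^\sigma$ for $\sigma\in\Yc_t$, a direct calculation gives $\Phi_{t+1} = \Phi_t - W_t(\Fc_t) - \epsilon\, W_t(\Xc_t\setminus\Fc_t)$. Writing $a_t := W_t(\Fc_t)/\Phi_t$ and $b_t := W_t(\Xc_t\setminus\Fc_t)/\Phi_t$, so that the movement probability above is $a_t + b_t$, this becomes $\Phi_{t+1}/\Phi_t = 1 - (a_t+\epsilon b_t)$. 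Applying $\ln(1-y)\leq -y$ and telescoping over the frontier steps yields $\sum_t (a_t + \epsilon b_t)\leq \ln d - \ln\Phi_T$. To lower-bound $\Phi_T$ I would use that some expert $\sigma^*$ survives until the end and has $n_\infty^{\sigma^*}\leq H$ (at most $H$ parking spots lie on any root-to-leaf-spot path), whence $\Phi_T\geq (1-\epsilon)^H$ and $-\ln\Phi_T \leq H\epsilon(1+\epsilon)$ by Taylor expansion (for $\epsilon\leq 1/2$). Since $\epsilon(a_t + b_t) \leq a_t + \epsilon b_t$, dividing by $\epsilon$ delivers
\[
  \textbf{E}\left[\sum_{t=m+1}^{k-1}\mathbf{1}^{\mathsf{TM}}(t)\right] \;=\; \sum_t (a_t+b_t) \;\leq\; (1+\epsilon)H + \frac{\ln d}{\epsilon}.
\]

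The delicate point is the choice of potential. Normalizing by $\Ac_t$ (as $\tilde\pi$ does, and as $\Phi_t=W_t(\Ac_t)$ mirrors) rather than by $[d]$ is what aligns the potential drop $a_t + \epsilon b_t$ with the expected loss $a_t + b_t$ closely enough that a single division by $\epsilon$ recovers the bound; the $(1-\epsilon)$ factor missing from the $a_t$ term is harmlessly absorbed by $\epsilon(a_t+b_t)\leq a_t+\epsilon b_t$. I expect the main obstacle to be precisely this coordination between the invariant of Lemma~\ref{lem:tminvariant}, the potential choice, and the cost accounting; once they are in sync, the remainder is Taylor-expansion bookkeeping.
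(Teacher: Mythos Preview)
Your argument is correct and reaches the stated bound, but the route differs from the paper's. The paper does not take $\Phi_t = W_t(\Ac_t)$ as its potential. Instead it works with the \emph{un-renormalized} experts distribution $\pi_t^\sigma = w_t^\sigma / W_t([d])$ and defines a per-expert cost $c_t^\sigma$ equal to $1$ whenever $\sigma \in \Xc_t$ \emph{or} $\sigma$ is already dead ($\sigma\in\Dc_t$), and $0$ otherwise. It then proves the pointwise inequality $\sum_{\sigma\in\Xc_t}\tilde\pi_t^\sigma \le \sum_\sigma c_t^\sigma \pi_t^\sigma$ via the auxiliary identity $\sum_{\sigma\in\Ac_t}(\tilde\pi_t^\sigma - \pi_t^\sigma) = \sum_{\sigma\in\Dc_t}\pi_t^\sigma$, and finishes by invoking the standard multiplicative-weights regret bound as a black box on the cost sequence $(c_t^\sigma)$, observing that the surviving expert has $\sum_t c_t^\sigma \le H$.

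Your approach instead folds the renormalization directly into the potential: by tracking $\Phi_t = W_t(\Ac_t)$ and splitting the one-step drop as $W_t(\Fc_t)+\epsilon\,W_t(\Xc_t\setminus\Fc_t)$, you arrange that the movement probability $a_t+b_t$ and the potential decrement $a_t+\epsilon b_t$ differ only by the harmless slack $\epsilon a_t \le a_t$, after which a single division by $\epsilon$ closes the bound. This is more self-contained (no black-box citation, no artificial cost on dead experts, no $\delta_t^\sigma$ bookkeeping); the paper's version is more modular, reducing to a textbook experts instance at the price of manufacturing the $\Dc_t$ contribution to $c_t^\sigma$ and then verifying it dominates the renormalization gap between $\tilde\pi$ and $\pi$.
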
\begin{proof} 
The first inequality for the prologue phase follows from the fact that the
car always moves towards the root and the definition of $H$. So now consider a time $t$ in the core phase,  and $\sigma\in[d]$. Let $\Dc_t = [d]\setminus\Ac_t$ be the set of dead paths just before decommissioning $r_t$ and define $$c^\sigma_t = \begin{cases}
1 & \sigma\in\Xc_t\text{ or }\sigma\in\Dc_t\\
0 & \text{otherwise.}
\end{cases}$$ 
Let $\delta_t^\sigma$ be defined for all $\sigma \in \mathcal{A}_t$ such that $\tilde{\pi}_t^\sigma = \pi_t^\sigma + \delta_t^\sigma$. Then by summing over all $\sigma\in\Ac_t$ we have
\begin{equation}\label{delta1}
    \sum_{\sigma\in\Ac_t}\tilde{\pi}^\sigma_t = \sum_{\sigma\in\Ac_t}\pi^\sigma_t + \sum_{\sigma\in\Ac_t}\delta_t^\sigma.
\end{equation} Note that $\sum_{\sigma\in\Ac_t}\tilde{\pi}^\sigma_t = 1$ by definition of $\tilde{\pi}_t$ and $\sum_{\sigma\in\Ac_t}\pi^\sigma_t = 1-\sum_{\tau\in\Dc_t}\pi^\tau_t$ since $\Ac_t$ and $\Dc_t$ partition $[d]$. Thus Equation \ref{delta1} can be reformulated as 
\begin{equation}\label{deltaspis}
    \sum_{\sigma\in\Ac_t}\delta_t^\sigma = \sum_{\tau\in\Dc_t}\pi^\tau_t.
\end{equation}
Thus
\begin{align*}\textbf{E}\left[ \mathbf{1}^{\mathsf{TM}}(i) \right] &= \sum_{\sigma \in \Xc_t} {p}_t^\sigma&\text{By Defn. of } p_t^\sigma\\
&=\sum_{\sigma \in \Xc_t} \tilde{\pi}_t^\sigma &\text{by Lemma }\ref{lem:tminvariant}\\
&=\sum_{\sigma \in \Xc_t} \delta^\sigma_t + \sum_{\sigma \in \Xc_t} \pi_t^\sigma &\text{By Defn. of } \delta_t^\sigma\\
&\leq\sum_{\sigma \in \mathcal{A}_t} \delta_t^\sigma + \sum_{\sigma \in \Xc_t} \pi_t^\sigma &\text{Since } \Xc_t\subseteq\Ac_t\\
&=\sum_{\sigma \in \mathcal{D}_t} \pi_t^\sigma + \sum_{\sigma \in \Xc_t} \pi_t^\sigma & \text{by Eq. } \ref{deltaspis}\\
&= \sum_{\sigma} c_t^\sigma \pi_t^\sigma& \text{By Defn. of } c_t^\sigma\\
&\le (1 + \epsilon) \min_{\sigma \in [d]} \left( \sum_{i=m+1}^{k-1} c_t^\sigma\right) + \frac{\ln d}{\epsilon} & \text{By Multiplicative Weights Analysis \cite{arora2012multiplicative}}  \\
&\le (1 + \epsilon)H + \frac{\ln(d)}{\epsilon}& \text{By considering the last alive path } T_\sigma
\end{align*}
 For the last inequality, 
note that if $T_\sigma$ is the last alive path then
 $\sum_{i=m+1}^{k-1}  c^\sigma_t \leq H$ by the definition of $H$. 
\end{proof}

\subsection{Monotonicity}
\label{subsec:treematchmonotonicity}

We show that any neighbor algorithm for online metrical search
can be extended to a monotone algorithm for online metrical matching, where a neighbor algorithm has the property that 
if it moves the car to a parking spot $s_i$ with positive probability then it must be the case that there is no in-commission parking spot on the route to $s_i$.
As \TS\ is obviously a neighbor algorithm, it then 
follows that it can be extended to a monotone algorithm
for online metrical matching, which we will call \TM.

\begin{lemma} 
\label{lemma:montonicitycondition}
Let $A$ be a neighbor algorithm for online metrical search. Then  there exists a monotone algorithm $B$ for online metrical matching on a tree metric that is identical to $A$ for online metrical search instances.
\end{lemma}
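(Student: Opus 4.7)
The plan is to exhibit $B$ by converting $A$'s internal randomness at each step into a random monotone partition of $T$, and letting $B$ match each request to the leader of the part containing it. Monotonicity of $B$ will then follow from a pointwise argument directly on the partition, in the spirit of Section \ref{sec:price}.

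To describe the construction, I would first characterize $A$'s state at each time step by the current set $S$ of in-commission spots together with the joint distribution of $(c,s')$, where $c\in S$ is the car's position and $s'\in S\setminus\{c\}$ is the spot to which $A$ would jump if $c$ were decommissioned; the neighbor property of $A$ guarantees that, in every realization, the path from $c$ to $s'$ in $T$ contains no other spot of $S$. Given a sample of $(c,s')$ I would build the partition $P(c,s')$ of $V$ as follows: first take the Voronoi partition of $V$ with respect to $S$ (ties broken by a fixed rule), so that every part is a connected subtree containing a unique spot of $S$; then merge the $c$-cell with the $s'$-cell into a single part whose leader is $s'$, and leave every other cell as a part whose leader is its own spot. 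Because $c$ and $s'$ have no spot of $S$ strictly between them, their Voronoi cells are adjacent and the merged part is connected, so $P(c,s')$ is a valid monotone partition in the sense of Definition \ref{definition:monotonicity} and its associated notion of monotone partition. Algorithm $B$ samples $P(c,s')$ freshly at each step and matches the incoming request to the leader of its part.

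Verification then consists of two pieces. First, on a search instance the request $r_t$ is at some in-commission spot $s$; a case analysis on $s=c$, $s=s'\ne c$, or $s\notin\{c,s'\}$ shows that $B$ matches $r_t$ to exactly the spot $A$ would send the car to (either $s$ itself at cost $0$, or $s'$ at cost $d(c,s')$), and an induction on $t$ coupling $B$'s sampled $(c,s')$ with $A$'s randomness propagates this equality across the whole instance. Second, for monotonicity of $B$, fix any sampled $P(c,s')$; if $u$ lies in the part led by $s$, then connectivity of the part, together with the fact that $s$ lies in it, forces the entire $u$--$s$ path into that part, so every $v$ on that path is also in the part led by $s$. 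Hence $\mathbf{1}[\text{leader of } u\text{'s part}=s]\le\mathbf{1}[\text{leader of }v\text{'s part}=s]$ pointwise, and taking expectation over the randomness of $P(c,s')$ delivers the inequality required by Definition \ref{definition:monotonicity}.

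The step I expect to be most delicate is the coupling between $B$ and $A$ across time steps on search instances. In the case $r_t=c$, $B$ consumes $s'$ from its matching pool while $A$ decommissions $c$, so the two algorithms' notions of ``still usable'' diverge the first time the car moves. I would need to maintain an invariant that records this discrepancy and then show it is harmless on any search instance, because the only subsequent request the adversary can place at $c$ or $s'$ is a decommissioning there, at which point $B$'s partition rule still selects the same spot that $A$'s neighbor rule would. Phrasing this invariant so that it automatically survives when requests arrive at non-server vertices, without sacrificing the pointwise inequality that drove monotonicity above, is where I expect most of the technical care to lie.
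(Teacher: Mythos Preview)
Your partition-based construction is genuinely different from the paper's argument, and your monotonicity step is cleaner than what the paper writes. The paper does not build a monotone partition at all: its $B$ simply matches co-located requests greedily until the first request $r$ with no available co-located server appears, then starts $A$ with the car at $r$; thereafter a request at the car's current spot is matched to wherever $A$ moves the car, a request at any other available server is matched there, and the first request fitting neither description (a ``deviating'' request) is handled once using the neighbor structure, after which $B$ abandons $A$ and runs nearest-server greedy. The neighbor property is invoked only to argue monotonicity of that single deviating step.

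There are, however, two real gaps in your plan. First, you never say how $B$ initializes $A$: your partition needs $c\in S$, but in the matching instance the ``car'' is the first request with no co-located server, and its vertex is generally not in $S$ at all, so the Voronoi-plus-merge recipe is undefined at exactly the moment you need to start. You need a preamble (match co-located requests greedily; when the first uncovered request arrives, match it to wherever $A$ first parks the car) before your partition machinery can engage. Second, the discrepancy you worry about in your last paragraph is actually benign on search instances: the invariant $\{B\text{'s available servers}\}=\{A\text{'s in-commission spots}\}\setminus\{c\}$ is preserved by every search-instance request, and since your partition's leaders are exactly $S\setminus\{c\}$, $B$ always matches to a server it still owns and the coupling with $A$ goes through. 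The place it genuinely fails is a non-search instance: once a request lands at a non-spot vertex other than $c$, $B$ consumes a leader $\ell$ but $A$'s state does not change, so at the next step your partition may name $\ell$ as a leader even though $B$ no longer has it, and $B$ ceases to be a well-defined matching algorithm. The paper's fix is not to carry an invariant through this situation but to detect it: the first deviating request certifies the instance is not a search instance, so $B$ handles that one request monotonically and then switches to any monotone algorithm (greedy suffices), since identity with $A$ is no longer required. Trying to extend your invariant through deviating requests is the wrong direction; you should add the same escape hatch.
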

\begin{proof}
We construct $B$ using $A$ as a subroutine. 
As long as $B$ continues to receive requests for which there
is a co-located available server, it will send that request
to a co-located server. By renumber the requests, let $r_1$ for the first request for which there is not a co-located server.
We now know that at this point, there optimal matching has positive cost. Now $B$ starts running $A$, with the current server locations and the car parked at $r_1$. 
As long as there is an optimal matching with only one positive edge, $B$ will continue to run $A$. Whenever $A$ doesn't move the car, the request arrived at a co-located server, which is
where $B$ will send the request. If a request arrives at the location of the car in $A$, then $B$ moves the request to where $A$ moves the car. 

If a request $r_t$ arrives at a vertex without a co-located server and that is not the location of the car, then let
us call it a deviating request. $B$
handles a deviating request in the following way.
 Let $c$ be the current parking spot of the car. Let $Q$ be the collection of
in-commission parking spots that can be reached from $c$ without passing over another in-commission parking spot. 
For a parking spot $s_i \in Q$ let $p_i$ be the probability that $A$ moves the car to $s_i$ if $c$ is decommissioned.
Let $R$ be the vertices that can reach $c$ without passing
through a vertex in $Q$. Note $R$ includes no vertices in $Q$. Let $X= V- R - Q$, the vertices separated from $c$ by $Q$.
 If a request arrives at vertex  $v \in X$ then the 
 $B$ moves the request to the first in-commission parking spot on the path from $v$ to $c$. If a request arrives at a vertex $v \in Q$ then $B$ moves the request to a co-located server. If a request arrives at a vertex $v \in R$ then moves the request
 to each server $s_i \in Q$ with probability $p_i$.
 It is clear that $B$ is monotone for a deviating request $r_t$.  After receiving a deviating request, then $B$ knows that the instance is not an online metrical searching instance, and 
 can then mimic any monotone algorithm, for exactly the greedy algorithm that moves a request to the nearest available server.
\end{proof}

\section{The \HTM\ Algorithm}\label{sec:clustering}

In subsection \ref{subsec:htbuild} we describe 
an algorithm \HTB\ that 
builds a grove $G$ from a tree metric $T$
with distance metric $d_T$ before
any request arrives. We assume without loss of generality that
the minimum distance in $T$ is 1. 
In subsection \ref{subsec:htmdescription}
we then give an algorithm \HTM\ for online metrical matching on a tree metric that utilizes the
algorithm \TM\ on each tree in the grove constructed  by \HTB,
and we prove some basic properties of the grove $G$.
In subsection \ref{subsec:grovematchanalysis}
we show that \HTM\ is a monotone online metrical matching algorithm on a tree metric, and is 
$O(\log^6 \Delta \log^2 n)$-competitive for online metrical search instances.

\subsection{The \HTB\ Algorithm}
\label{subsec:htbuild}

\begin{definition}\label{def:hierarchicaltree}
A grove $G$ is either: a rooted tree $X$ consisting of a single vertex, or
an unweighted 
rooted tree $X$ with a  grove $X(v)$ associated with 
each vertex $v \in X$. The tree $X$ is  the canopy of the grove $G$.
Each $X(v)$ is  a subgrove of $X$. The canopy of a subtree $X(v)$ is a child of $X$. Trees in $G$ are descendants of $X$.
\end{definition}

\medskip
\noindent
{\bf \HTB\ Description:} \HTB\ is a recursive algorithm 
that takes as input a tree metric $T$, a designated
root $\rho$ of $T$, positive real 
$R$, a positive real $\alpha$ and a positive integer $d$. In the initial call to  \HTB,
$T$ is the original tree metric, $\rho$ is an arbitrary vertex in $T$, $R$ is the maximum distance $\Delta$ between $\rho$ and any other vertex in $T$, $d$ is $1$, and $\alpha$ is a parameter to be determined later in the analysis. 

If $T$ consists of a single vertex $v$, then the recursion ends and the algorithm outputs a rooted tree consisting of only
the vertex $v$. We call this tree a leaf of the grove. Otherwise the algorithm's first goal is to 
partition the vertices of $T$
into parts $P_1, \ldots, P_k$,
and designate one vertex $\ell_i$
of each partition $P_i$ 
as being the leader of $P_i$. To accomplish this,
the algorithm sets
partition $P_1$ to consist of the vertices in $T$ that 
are within a distance $z$ of $\rho$,
where $z$ is selected uniformly at random from
the range $[0, \frac{R}{\alpha}]$. The leader $\ell_1 $
is set to be $\rho$.
To compute $P_i$ and $\ell_i$ after the first $i-1$ parts
and leaders are computed the algorithm takes the following
steps. Let $\ell_i$ be a vertex such that  $\ell_i \notin \cup_{j=1}^{i-1} P_j$ and
for each vertex $v$ on the path $(\ell_i, \rho)$ it is the case that $v \in \cup_{j=1}^{i-1} P_j$. So $\ell_i$ is not in
but adjacent to the previous partitions.
Then $P_i$ consists of all vertices  $v \in T - \cup_{j=1}^{i-1} P_j$ 
that are within
distance $\frac{R}{\alpha}$ from $\ell_i$ in $T$. 
So $P_i$ intuitively is composed of vertices that are not
in previous partitions and that are close to $\ell_i$.

The tree $X$ at this point in the recursion  has
a vertex for each part in the partition of $T$.
There is an edge between vertices/parts $P_i$
and $P_j$ in $X$ if and only if there is
an edge $(v, w)$ in $T$ such that
$v \in P_i$ and 
$w \in P_j$. We identify this edge in $X$ with the edge
$(v, w)\in T$. 
The root of $X$ is the vertex/part $P_1$. 
The tree $X$ is at depth $d$ in the grove. 
The grove $X(P_i)$  associated with vertex $P_i$ in $X$ is the result of
calling \HTB\ on the subtree of $T$ induced by
the vertices in $P_i$, with $\ell_i$ designated
as the root, parameter $R$ decreased by
an $\alpha$ factor, parameter $\alpha$ unchanged, and 
parameter $d$ incremented by 1.

So from here on, let $G$ denote the grove built by 
\HTB\ on the original tree metric $T$.

\begin{definition} ~
\begin{itemize}
\item
For an edge  $(u,v) \in T$, let $\delta(u, v)$ be the depth in the grove $G$ of the tree $X$ 
that contains $(u, v)$. Note that each edge in $T$ occurs in exactly one tree in $G$. 
\item
For an edge $(u,v) \in T$,
define $d_{G}(u, v)$ to be $\frac{\Delta}{\alpha^{\delta(u, v)-1}}$. 
\item
For vertices $u_0, u_h \in T$, connected by
the simple path $(u_0, u_1, \dots, u_h)$ in $T$,
define $d_G(u_0, u_h)$ to be
$\sum_{i = 0}^{h-1} d_{G}(u_i,u_{i+1})$. 
Obviously $d_G$ forms a metric on the vertices of $T$.  
 \end{itemize}
 \end{definition}
 
\begin{lemma}
\label{usefulFacts} 
 Recall that $d_T(u, v)$ is the shortest path distance between two vertices $u, v$ of tree $T$. For all vertices $u, v\in T$, we have that $d_{G}(u, v)\geq d_T(u, v)$
        and $\Exp{d_{G} (u, v)} \leq \alpha(1+\log \Delta)  \cdot d_T(u, v)$.
\end{lemma}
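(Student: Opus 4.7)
\medskip
\noindent\textbf{Proof plan.} The strategy is to reduce both parts of the claim to the level of a single edge of $T$ by exploiting the definition of $d_G$. For any vertices $u, v \in T$, the unique $T$-path $u = u_0, u_1, \dots, u_h = v$ satisfies $d_T(u,v) = \sum_{i=0}^{h-1} d_T(u_i, u_{i+1})$ because $T$ is a tree, and $d_G(u,v) = \sum_{i=0}^{h-1} d_G(u_i, u_{i+1})$ by definition. Thus both inequalities (and, for the second one, via linearity of expectation) follow edge-by-edge.

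Fix an edge $(x,y) \in T$ of weight $w = d_T(x,y)$. For the lower bound, let $\delta = \delta(x,y)$ and $R_\delta = \Delta/\alpha^{\delta-1}$, the radius parameter of the \HTB\ recursion whose canopy contains $(x,y)$. The input tree to that recursion is a rooted subtree of $T$ with root $\rho'$, all of whose vertices lie within $T$-distance $R_\delta$ of $\rho'$. Since $(x,y)$ is an edge of this rooted subtree, one of $x,y$ is the parent of the other in it, so $d_T(x,y) = |d_T(\rho', x) - d_T(\rho', y)| \leq R_\delta = d_G(x,y)$, as desired.

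For the upper bound, I would compute $\Exp{d_G(x,y)} = \sum_{d \geq 1} R_d \cdot \Prob{\delta(x,y) = d}$ and aim for the key per-level bound $\Prob{\delta(x,y) = d} \leq \alpha w / R_d$. Conditional on the edge surviving to level $d$, both endpoints lie in the input tree of the depth-$d$ recursion, and the only randomness at that level is the truncation radius $z \in [0, R_d/\alpha]$ defining $P_1$. Since one of $x,y$ is the parent of the other in that rooted subtree, $|d_T(\rho', x) - d_T(\rho', y)| = w$, so the set of $z$-values for which $P_1$ separates $x$ from $y$ is an interval of length at most $w$; the probability of being separated by $P_1$ is therefore at most $\alpha w / R_d$. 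One then has to argue that separation by a later ball $P_j$ (with $j \geq 2$) does not add new probability: informally, once both endpoints fall outside $P_1$, the BFS-like peeling of subsequent balls begins at the first out-of-$P_1$ vertex on the $\rho'$-to-$y$ path and swallows $x$ and $y$ together whenever $w$ does not already exceed $R_d/\alpha$. Granted $R_d \cdot \Prob{\delta(x,y) = d} \leq \alpha w$, I would split the sum at the threshold where $R_d \approx \alpha w$: the shallower levels (of which there are at most $1 + \log_\alpha \Delta$) each contribute at most $\alpha w$, and the deeper levels contribute a geometric series summing to $O(\alpha w)$. Since $\log_\alpha \Delta \leq \log \Delta$ for $\alpha \geq 2$ and the minimum distance is $1$, this yields $\Exp{d_G(x,y)} \leq \alpha(1 + \log \Delta) \cdot w$.

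The main obstacle is justifying the ``no extra probability from later balls $P_j$'' step, because the algorithm specifies $\ell_j$ only up to adjacency with the union of earlier parts. I expect this to be settled by a monotonicity argument: as $z$ increases, $P_1$ only grows, and on a tree the collection of $z$-values for which some later $P_j$ separates $(x,y)$ is already contained in the length-$w$ interval where $P_1$ does the separating.
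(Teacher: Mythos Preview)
Your overall strategy---reduce to single edges, bound the per-level cut probability by $\alpha w/R_d$, then sum over levels---is exactly the paper's approach. The lower-bound argument is essentially the same as the paper's, phrased via the radius of the depth-$\delta$ input tree rather than via ``the edge was not cut at depth $\delta-1$.''

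Where you diverge is in handling the ``later balls'' issue, and here your proposed monotonicity fix is incorrect. The claim that the set of $z$-values for which some later $P_j$ separates $(x,y)$ is contained in the length-$w$ interval where $P_1$ separates them is false. If both $x$ and $y$ lie at distance greater than $R_d/\alpha$ from $\rho'$, then $P_1$ \emph{never} separates them (the relevant interval is empty), yet the edge can certainly be cut by a later $P_j$ whose leader sits on the $\rho'$-to-$y$ path. So the containment you propose does not hold, and the monotonicity of $P_1$ in $z$ does not by itself control what later balls do.

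The paper's argument for the bound $\Prob{A_i \mid \overline{A_{<i}}} = w/(R_i/\alpha)$ is different: it views the partition restricted to the $\rho'$-to-$y$ path as a randomly shifted grid with spacing $R_i/\alpha$ and offset $z$, so that exactly one cut falls in any interval of length $R_i/\alpha$ and the chance it lands in the length-$w$ edge interval is $w/(R_i/\alpha)$. The paper also makes a second observation you do not use: since every part at depth $i$ has radius at most $R_i/\alpha$ and the leader is never a descendant of the edge, $w > R_i/\alpha$ forces a cut, so $\delta \le 1+\log_\alpha(\Delta/w)$. This bounds the number of terms in the sum directly and makes your geometric-tail split unnecessary---each of the at most $1+\log_\alpha(\Delta/w)$ terms contributes at most $\alpha w$, giving the stated bound immediately.
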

\begin{proof}
Let $(u_0, u_1, \dots, u_h)$ be the path from $u_0$ to $u_h$
  in $T$. Since   $d_{G}(u_0,u_h) = \sum_{i = 0}^{h-1} d_{G}(u_i,u_{i+1})$, and $d_{T}(u_0,u_h) = \sum_{i = 0}^{h-1} d_{T}(u_i,u_{i+1})$, it is sufficient to prove this for each $(u, v) \in T$. For notational simplicity let $\delta := \delta(u,v)$. 
  
Notice that from the construction of $G$ that $d_G(u,v) = \frac{\Delta}{\alpha^{\delta-1}}$. Showing that $d_T(u, v) \leq \frac{\Delta}{\alpha^{\delta-1}}$ proves the first inequality: if $\delta = 1$ then $d_T(u, v) \leq \Delta$, else, $d_T(u, v) \leq \frac{\Delta}{\alpha^{\delta - 1}}$ since $(u, v)$ did not get cut at depth $\delta - 1$. 

To prove the second inequality, let $A_i$ be the event that 
$\delta = i$, and $A_{< i}$ be the event that $\delta < i$. Notice that if $d_T(u, v) \geq \frac{\Delta}{\alpha^{i}}$ for some $i$, then, $\delta \leq i$ since cuts at depth $i$ of the recursion are made in increments of $\frac{\Delta}{\alpha^i}$ distance. Hence, the value of $\delta$ is at most $1 + \log_\alpha(\frac{\Delta}{d_T(u,v)})$. By the linearity of expectation we have:
  \begin{align*}
  \Exp{d_{G}(u,v)} &= \sum_{i = 1}^{1 + \log_\alpha(\Delta /d_T(u,v))} \Prob{A_i} \cdot \frac{\Delta}{\alpha^{i-1}} = \sum_{i = 1}^{1 + \log_\alpha(\Delta/d_T(u,v))} \Prob{A_i |\;  \overline{A_{< i}}\; } \Prob{\; \overline{A_{< i}}\;} \cdot \frac{\Delta}{\alpha^{i-1}} \\ 
  &\leq \sum_{i = 1}^{1 + \log_\alpha(\Delta/d_T(u,v))} \Prob{A_i |\;  \overline{A_{< i}}\; }\cdot \frac{\Delta}{\alpha^{i-1}} = \sum_{i = 1}^{1 + \log_\alpha(\Delta/d_T(u,v))} \alpha d_T(u,v).
  \end{align*} 
  The last equality follows from the fact that  $\Prob{A_i | \overline{A_{< i}}\;} = \frac{d_T(u, v)}{\Delta/\alpha^i}$ 
  since cuts at depth $i$ of the recursion are made in increments of 
  $\frac{\Delta}{\alpha^i}$ distance with an offset randomly
  chosen from $[0, \frac{\Delta}{\alpha^i}]$.  
\end{proof}

\begin{comment}
So roughly speaking the root of the heirarchical tree generated by \HTB is a tree where the cost of each edge is the diameter of the
metric space divided by a parameter $\alpha$. Level $i$ of a grove consists of a collection of trees, where there is a bijection between
each tree on this level and the nodes in the next higher level of the grove. Further the cost of each edge on this level is the diameter
of the tree metric space divided by $\alpha^i$. 
As in an HST, there is a bijection between the vertices of the original metric space and the leaves of the grove.
\end{comment}

\begin{corollary}\label{HTreduc}
 An algorithm \textbf{B} that is $c$-competitive 
 for online metric matching on $T$ 
 with distance  metric $d_{G}$  is $O(c \cdot \alpha  \log  \Delta)$-competitive for online metric matching on $T$ with distance metric $d_T$. 
 \end{corollary}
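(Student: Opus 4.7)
The plan is a straightforward three-step sandwich argument using the two inequalities of Lemma~\ref{usefulFacts}. Fix any instance $I$ (a set of servers and a sequence of requests on the vertices of $T$). Let $\mathrm{ALG}_T(I)$ and $\mathrm{ALG}_G(I)$ denote the cost of the matching produced by $\mathbf{B}$ on $I$ measured under $d_T$ and $d_G$ respectively, where $\mathbf{B}$ is run treating the metric as $d_G$. Let $\mathrm{OPT}_T(I)$ and $\mathrm{OPT}_G(I)$ denote the offline optima under $d_T$ and $d_G$.

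First, because $d_G(u,v) \geq d_T(u,v)$ pointwise (first part of Lemma~\ref{usefulFacts}), the same matching decisions made by $\mathbf{B}$ have cost in $d_T$ at most their cost in $d_G$, giving $\mathrm{ALG}_T(I) \leq \mathrm{ALG}_G(I)$ deterministically. Second, by the $c$-competitiveness hypothesis on $\mathbf{B}$ for the metric $d_G$, we have $\Exp{\mathrm{ALG}_G(I)} \leq c \cdot \Exp{\mathrm{OPT}_G(I)}$, where the expectation is taken over the random grove $G$ produced by \HTB\ and any internal randomness of $\mathbf{B}$.

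Third, we bound $\Exp{\mathrm{OPT}_G(I)}$ against $\mathrm{OPT}_T(I)$ by exhibiting a feasible witness. Let $M^\star$ be an offline matching that is optimal for $d_T$ on $I$. Since $\mathrm{OPT}_G(I)$ is the minimum matching cost under $d_G$, it is at most the $d_G$-cost of $M^\star$; taking expectations and applying the second part of Lemma~\ref{usefulFacts} edge-by-edge,
\begin{align*}
\Exp{\mathrm{OPT}_G(I)} \;\leq\; \sum_{(u,v)\in M^\star} \Exp{d_G(u,v)} \;\leq\; \alpha(1+\log \Delta) \sum_{(u,v)\in M^\star} d_T(u,v) \;=\; \alpha(1+\log \Delta)\cdot \mathrm{OPT}_T(I).
\end{align*}
Chaining the three bounds yields $\Exp{\mathrm{ALG}_T(I)} \leq c \cdot \alpha(1+\log\Delta) \cdot \mathrm{OPT}_T(I) = O(c\,\alpha \log \Delta)\cdot \mathrm{OPT}_T(I)$, as claimed.

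There is no significant obstacle here; the only subtlety worth flagging is that the $d_G$-optimal matching is not in general the $d_T$-optimal matching, which is why step three uses $M^\star$ as a witness rather than trying to relate the two optima directly. The argument relies crucially on the fact that $d_G$ dominates $d_T$ pointwise so that the domination in step one is unconditional, not just in expectation.
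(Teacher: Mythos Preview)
Your proof is correct and is exactly the standard embedding argument that the paper leaves implicit; the paper's own proof is the single line ``This is an immediate consequence of Lemma~\ref{usefulFacts},'' and your three-step sandwich is precisely the unpacking of that sentence.
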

 \begin{proof}
 This is an immediate consequence of Lemma \ref{usefulFacts}.
 \end{proof}

\subsection{\HTM\ Description}
\label{subsec:htmdescription}
We now describe an  algorithm \HTM\ for online metrical
matching for tree metrics.

\medskip
\noindent
{\bf \HTM\ Description:} 
Conceptually within \HTM, 
a separate copy \TM($X$) of the online metric matching algorithm
\TM\ will be run on each tree $X$ in the grove $G$ constructed by the algorithm \HTB.
In order to accomplish this, we need to initially 
place servers
at the vertices  in $X$. We set the number of
servers initially located at each vertex $x \in X$  to the number of servers in $T$ that are located at vertices
$v \in T$ such that 
 $v \in x$ (recall that each vertex in a tree in the grove  $G$ corresponds to a collection of vertices in $T$). 

When a request $r_t$ arrives at a vertex $v$ in
$T$, 
the algorithm \HTM\ calls the algorithm \TM\ 
on a sequence $(X_1, x_1), (X_2, x_2), \ldots$
where each $X_i$ is a tree of depth $i$ in $G$
and $x_i$ is a vertex in $X_i$. 
Initially $X_1$ is the depth 1 tree in $G$,
and $x_1$ is the vertex in $X_1$ that contains
$v$. Assume that \TM\ has already been called on
 $(X_1, x_1), (X_2, x_2), \ldots (X_{i-1}, x_{i-1})$,
then the algorithm \HTM\ processes
 $(X_i, x_i)$ in the following manner.
 First, \TM($X_i$) is called to respond to a request at $x_i$.
 Let $y_i$ be the vertex in $X_i$ that \TM($X_i$) moved this request to. 
If $X_i$ is a leaf in $G$, then \TM($X_i$) sets $y_i = x_i$,
and \HTM\ moves request $r_t$  to the unique vertex in $T$ corresponding
to $x_i$.  
If $X_i$ is not a leaf in $G$, then
$X_{i+1}$ is set to be the canopy of the grove
$X_i(y_i)$, and  
$x_{i+1}  = \argmin_{w \in T  : w \in X_{i+1}} d_T(v, w)$
or equivalently $x_{i+1}$ is the first
vertex in $X_{i+1}$ that one encounters if one walks in $T$ from
$v$ to the vertices of $X_{i+1}$.

\begin{lemma}\label{hopdist}
Consider a tree $X$ at depth $\delta$ with root $\rho$ in grove $G$. For any vertex $v$ in $X$, the number of hops in $X$ between $\rho$ and $v$ is at most  $\alpha+1$. Furthermore, by the time that  \TM($X$)   enters its core phase,  it must be the case that for every descendent tree $Y$ of $X$ in $G$
there will be no future movement of the car on edges in $Y$ 
while \TM($Y$)  is in its prologue phase. 
\end{lemma}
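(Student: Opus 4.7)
For part 1, my plan is to analyze the unique path in $T_\delta$ (the tree input to the LDD call producing $X$) from $\rho$ to $v$, which I will call the spine. I would show that every leader of every part on the $X$-path from $\rho$ to the part containing $v$ lies on this spine, and then use a pairwise lower bound on distances between leaders to bound the hop count.

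For the spine-containment step I would first argue by induction on the LDD construction order that $\bigcup_{j<i} P_j$ is a connected subtree of $T_\delta$, since the neighbor of $\ell_i$ toward $\rho$ lies in a previous part so attaching $P_i$ preserves connectedness. In a tree, removing the connected subtree $P_i$ leaves the component containing $\rho$ attached to $P_i$ by exactly one edge, and this attachment vertex must be $\ell_i$, the unique vertex in $P_i$ adjacent to previously built parts. Therefore the spine from $\rho$ to $v$ passes through $\ell_{i_0}=\rho, \ell_{i_1}, \ldots, \ell_{i_k}$ in order. For the distance bound, if $i<j$ and $d_T(\ell_i,\ell_j) \leq R/\alpha$ then at the moment $P_i$ was being constructed $\ell_j$ was both available (since $j>i$) and within $R/\alpha$ of $\ell_i$, so it would have been absorbed into $P_i$, contradicting $\ell_j \in P_j$. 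Thus consecutive leaders on the spine (beyond the first gap) are more than $R/\alpha$ apart, giving spine length strictly greater than $(k-1)(R/\alpha)$. Since the spine has length at most $R$ (the radius of $T_\delta$ from $\rho$, preserved inductively through the recursion), we obtain $k-1<\alpha$ and hence $k\leq \alpha+1$.

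For part 2, my plan is to exploit that the server count at a vertex $x\in X$ tracked by \TM($X$) equals the total number of in-commission servers across the entire subgrove rooted at $x$. During \TM($X$)'s prologue the car moves monotonically toward $\rho_X$ and departs a vertex only once its count reaches zero, which means every server in the corresponding subgrove has been matched. For any descendant $Y$ inside such a depleted subgrove, all of $Y$'s servers are exhausted and \TM($Y$) undergoes no further movement whatsoever. By the time \TM($X$) transitions to core phase, the car has departed every vertex on its prologue path, so every descendant of those vertices is terminated. The remaining descendants are those inside subgroves of vertices of $X$ that the prologue path did not visit, and these are not accessed until some later request routes through them during $X$'s core phase. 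The main obstacle, and the most delicate step, is showing that when such a $Y$ is first accessed \TM($Y$) bypasses its prologue: I would establish this by tracking how \HTM\ initializes a freshly invoked descendant and verifying that the path from the initial vertex to $\rho_Y$ contains no in-commission server, so \TM($Y$) enters core phase immediately at its first request.
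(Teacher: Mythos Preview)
Your plan for part 1 is correct and is precisely the argument the paper's one-line proof (``This follows immediately from the fact that the parameter $R$ decreases by an $\alpha$ factor on each recursion'') is gesturing at. The paper gives no further detail; your spine argument, including the observation that the construction indices $i_0<i_1<\cdots<i_k$ increase along the $X$-path (which you need so that $\ell_{i_j}$ was available when $P_{i_{j-1}}$ was built) and the special treatment of the first gap (since $P_1$ has random radius $z\le R/\alpha$ rather than exactly $R/\alpha$), fills in exactly what is needed.

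For part 2, the paper offers nothing beyond the same one-liner, so you are already being more careful than the source. Your framework is the right one, but there is a flaw in the specific plan for the ``delicate step.'' You assume that subgroves $X(x)$ for $x$ off the prologue path are ``not accessed until some later request routes through them during $X$'s core phase,'' so that $\TM(Y)$ is freshly invoked at that point. This is false: during $\TM(X)$'s prologue (both the pre-$\TS$ stage and during $\TS$'s prologue), any request landing at an off-path vertex $x$ with available servers is matched at $x$, which triggers a recursive call into $Y=\text{canopy of }X(x)$. So $\TM(Y)$ may already have processed many requests, possibly started its own $\TS$, and be mid-prologue by the time $\TM(X)$ enters core. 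Moreover, when $\TM(X)$ is in core phase and receives a request at such an $x$ with the original $v\in x$, the recursed request lands at the part of $Y$ containing $v$, which need not be $\rho_Y$; if that part is empty this could start (or continue) $\TS(Y)$'s prologue with the car away from $\rho_Y$. To close the gap you would need a more global argument relating the state of $\TM(Y)$ to that of $\TM(X)$ across all requests, not just the first post-core access. The paper does not supply one.
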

\begin{proof}
This follows immediately from the fact that the parameter
$R$ decreases by an $\alpha$ factor on each recursion.
\end{proof}

\subsection{\HTM\ Analysis}
\label{subsec:grovematchanalysis}

We now analyze \HTB\ and \HTM\ under the assumption that $\alpha = (\ln n)(\log_\alpha^2\Delta)$ and $\epsilon = \frac{1}{\log_\alpha \Delta}$.

\begin{lemma}\label{lem: HTMGood}
The algorithm \HTM\ is $O(\log n\log^3 \Delta)$-competitive for online metrical search instances with the metric $d_G$.
\end{lemma}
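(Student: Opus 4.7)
The plan is to prove the claim by induction on the depth of the grove $G$, showing for each grove tree $X$ at depth $\delta$ a per-tree bound on the expected $d_G$-cost of \HTM\ and then summing contributions across depths. First, I would bound the expected number of jumps made by \TM($X$). By Theorem \ref{thm:MSUnit}, this is at most $H_X + (1+\epsilon)H_X + \ln(d_X)/\epsilon$. Lemma \ref{hopdist} gives $H_X \leq \alpha + 1$, and since there are at most $n$ leaf-spots in $X$, the additive $\ln(d_X)/\epsilon \leq (\ln n)\log_\alpha \Delta$ term satisfies $\ln(n)/\epsilon \leq \alpha / \log_\alpha \Delta$ under our parameter choice $\alpha = (\ln n)\log_\alpha^2 \Delta$. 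Both terms are $O(\alpha)$, so the expected jump count for \TM($X$) is $O(\alpha)$.

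Second, I would quantify the per-jump cost. Each jump in \TM($X$) moves the car across at most the hop-diameter $2(\alpha+1)$ of $X$ (again from Lemma \ref{hopdist}), and each hop at depth $\delta$ has $d_G$-weight exactly $\Delta/\alpha^{\delta-1}$. Thus the expected $d_G$-cost contributed by \TM($X$) to the overall cost of \HTM\ is $O(\alpha^2 \cdot \Delta/\alpha^{\delta-1})$ per visited tree at depth $\delta$. Each jump of \TM($X$) also triggers a recursive descent into a new depth-$(\delta+1)$ subgrove, so \HTM\ enters at most $O(\alpha)$ depth-$(\delta+1)$ children of $X$, and the inductive hypothesis bounds the cost contributed by each of these.

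The hardest step will be the charging argument that closes the induction: most of the depth-$(\delta+1)$ subgroves entered by \HTM\ need not lie on OPT's path from the initial car location to the final surviving spot, so naively summing inductive bounds across visited children overcounts. I expect to resolve this by observing that each jump in \TM($X$) costs $\Theta(\Delta/\alpha^{\delta-1})$ in $d_G$, which matches the $d_G$-scale of entering a fresh depth-$(\delta+1)$ subgrove; thus the off-path recursive costs can be amortized against the depth-$\delta$ jump costs already accounted for, so per-depth the induction loses only a factor of $O(\alpha)$. Summing the $O(\alpha^2 \cdot \Delta/\alpha^{\delta-1})$ per-tree bound over the $O(\log_\alpha \Delta)$ depths of $G$ and applying Corollary \ref{HTreduc}-style bookkeeping yields a total of $O(\alpha^2 \log_\alpha \Delta) \cdot \text{OPT}$; substituting $\alpha = (\ln n)\log_\alpha^2 \Delta$ and simplifying $\alpha \log \Delta \approx (\log n)(\log^3 \Delta)$ gives the target $O(\log n \log^3 \Delta)$ competitive ratio.
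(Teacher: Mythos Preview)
Your proposal has a genuine gap in the recursive accounting. You correctly bound the expected number of jumps in a single tree $X$ at depth $\delta$ by $O(\alpha)$, but the conclusion that ``per-depth the induction loses only a factor of $O(\alpha)$'' and then summing over depths is exactly where the argument breaks. If each depth-$\delta$ tree the car visits spawns $O(\alpha)$ visited depth-$(\delta+1)$ subgroves, then over $\log_\alpha\Delta$ levels the number of visited trees (and hence the cost) compounds multiplicatively as $\alpha^{\log_\alpha\Delta}=\Delta$, which is polynomial rather than polylogarithmic. Your amortization sketch --- charging off-path recursive cost at depth $\delta+1$ to the comparable-scale jump at depth $\delta$ --- is circular: the ``off-path recursive cost'' at depth $\delta+1$ itself contains off-path recursive cost at depth $\delta+2$, and so on, so the same charge is being reused across levels. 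The paper explicitly warns about this delicacy: even a $1\%$ slack in the jump bound would destroy the polylog competitiveness.

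The paper closes this gap by \emph{not} lumping prologue and core together. It defines two separate quantities $P(\delta)$ and $C(\delta)$ (total cost in a depth-$\delta$ tree and all its descendants, restricted respectively to the prologue and core phases of \TM($X$)), and uses the second part of Lemma~\ref{hopdist}: once \TM($X$) enters its core phase, every descendant \TM($Y$) has already finished its prologue. Hence the core recurrence is $C(\delta)\le Z\bigl(C(\delta+1)+2\Delta/\alpha^{\delta-2}\bigr)$ with $Z=(1+\epsilon)(\alpha+1)+\ln n/\epsilon$, and crucially no $P(\delta+1)$ term appears. With $\epsilon=1/\log_\alpha\Delta$ and $\alpha=(\ln n)\log_\alpha^2\Delta$ one gets $Z/\alpha=1+O(1/\log_\alpha\Delta)$, so unrolling over $\log_\alpha\Delta$ levels gives $(Z/\alpha)^{\log_\alpha\Delta}=O(1)$ rather than $\alpha^{\log_\alpha\Delta}$. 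The prologue recurrence $P(\delta)\le(\alpha+1)(P(\delta+1)+C(\delta+1))+\Delta/\alpha^{\delta-2}$ is then handled using the already-established bound on $C(\delta+1)$; the geometric ratio in its unrolling is $(\alpha+1)/\alpha$, again $1+o(1)$. The final ratio comes out to $O(\alpha\log^2\Delta)$, not your $O(\alpha^2\log_\alpha\Delta)$ --- your last paragraph's arithmetic is also inconsistent (you write $\alpha^2\log_\alpha\Delta$ and then simplify $\alpha\log\Delta$).
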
 
\begin{proof}
If \HTM\ directs a request to traverse an edge $(u, v) \in T$, we will say that the cost of this traversal is
charged to the unique tree in $G$ that contains $(u, v)$. 
Define $P(\delta)$ to be the charge incurred by a tree $X$
of depth $\delta$ in $G$ and all subgroves $X(v)$
of $X$ during the prologue phase  of \TM($X$). 
Define $C(\delta)$ to be the charge incurred by a tree $X$
of depth $\delta$ in $G$ and all subgroves $X(v)$
of $X$ during the core phase of  \TM($X$). 

Recall that the distance under the $d_G$ metric
 of ever edge in $X$ is $\frac{\Delta}{\alpha^{\delta-1}}$ and by Lemma \ref{hopdist} there are at most $\alpha+1$ vertices on the path from any leaf to the root of $X$. This gives us that the distance in $X$ under $d_G$ from the root to any leaf is at most $\alpha\frac{\Delta}{\alpha^{\delta-1}}=\frac{\Delta}{\alpha^{\delta-2}}$ and that the diameter of $X$ is at most $2\frac{\Delta}{\alpha^{\delta-2}}$. 
The only subgroves $X(v)$ of $X$ that incur costs during the prologue
phase of \TM($X$)   are those subgroves
for which $v$ is traversed by the car on its path to the root of
$X$. Thus we obtain the following recurrence:

\begin{equation}\label{Pdelta}
    P(\delta)\leq (\alpha+1)\left(P(\delta+1) + C(\delta+1)\right) +  \frac{\Delta}{\alpha^{\delta-2}}.
\end{equation} 
Note that once the core phase begins in \TM($X$), 
by Lemma \ref{hopdist} all instances of \TM($Y$) on any tree $Y$ that is a descendent of $X$ in $G$ can incur no most costs in their prologue phase. 
By Theorem \ref{thm:MSUnit}  the core phase cost on $X$
is at most $(1+\epsilon)(\alpha+1)+\frac{\ln n}{\epsilon}$ times 
the diameter of $X$, which is at most  $2\frac{\Delta}{\alpha^{\delta-2}}$. Thus we obtain the following
recurrence:
\begin{equation}\label{Cdelta}
    C(\delta)\leq\left(C(\delta+1)+2\frac{\Delta}{\alpha^{\delta-2}}\right)\left((1+\epsilon)(\alpha+1)+\frac{\ln n}{\epsilon}\right)\\
\end{equation}

We expand the recurrence relation for $C(\delta)$ first. Treating $((1+\epsilon)(\alpha+1)+\frac{\ln(n)}{\epsilon})$ as a constant $Z$, and expanding $C(\delta)$ we obtain: 
\begin{align*}
    C(\delta)&\leq \left(C(\delta+1)+2\frac{\Delta}{\alpha^{\delta-2}}\right)Z\\
    &= C(\delta+1)Z+2\frac{\Delta}{\alpha^{\delta-2}}Z\\
    &\leq 2\frac{\Delta}{\alpha^{\delta-1}}\sum_{i=1}^{\log_\alpha \Delta}\left(\frac{Z}{\alpha}\right)^i\\
    &\leq \frac{2\Delta\log_\alpha \Delta  }{\alpha^{\delta-1}} \left(\frac{Z}{\alpha }\right)^{\log_\alpha(\Delta)}\\
    &\leq\frac{2\Delta\log_\alpha \Delta  }{\alpha^{\delta-1}}\left((1+\epsilon)\frac{\alpha+1}{\alpha}+\frac{\ln(n)}{\epsilon\alpha}
    \right)^{\log_\alpha(\Delta)}\\
    &\leq \frac{2\Delta\log_\alpha \Delta  }{\alpha^{\delta-1}}\left(1 + \epsilon + \frac{1}{\alpha}+\frac{\epsilon}{\alpha}+\frac{\ln n}{\epsilon\alpha}\right)^{\log_\alpha(\Delta)}\\
    &\leq \frac{2\Delta\log_\alpha \Delta  }{\alpha^{\delta-1}}\left(1 + \frac{1}{\log_\alpha \Delta} + \frac{1}{(\ln n)(\log_\alpha^2\Delta)}+\frac{1}{(\ln n)(\log_\alpha^3\Delta)}+\frac{1}{\log_\alpha \Delta }\right)^{\log_\alpha(\Delta)}\\
    &\leq \frac{2\Delta\log_\alpha \Delta  }{\alpha^{\delta-1}}\left(1 + \frac{4}{\log_\alpha \Delta}\right)^{\log_\alpha \Delta}\\
    &\leq\frac{2e^4 \Delta\log_\alpha \Delta  }{\alpha^{\delta-1}}
\end{align*}
Now expanding the recurrence relation for $P(\delta)$ we obtain:
\begingroup
\allowdisplaybreaks
\begin{align*}
    P(\delta)&\leq (\alpha+1)\left(P(\delta+1) + C(\delta+1)\right) + \frac{\Delta}{\alpha^{\delta-2}}\\
    &\leq (\alpha+1)\left(P(\delta+1)+\frac{2 e^ 4\Delta}{\alpha^{\delta -1}}\log_\alpha \Delta\right)+\frac{\Delta}{\alpha^{\delta-2}}\\
    &= (\alpha+1)P(\delta+1)+\frac{2 e^ 4\Delta (\alpha + 1)}{\alpha^{\delta -1}}\log_\alpha \Delta+\frac{\Delta}{\alpha^{\delta-2}}\\
    &\le (\alpha+1)P(\delta+1)+\frac{2 e^ 4\Delta (\alpha + 1)}{\alpha^{\delta -1}}\log_\alpha \Delta+\frac{\Delta (\alpha + 1)}{\alpha^{\delta-1}}\\
    &\le (\alpha+1)P(\delta+1)+\frac{\Delta (\alpha + 1)}{\alpha^{\delta -1}}\left( 2e^4 \log_\alpha \Delta + 1\right)\\
    &\leq(\alpha+1)P\left(\delta+1\right) + \frac{3e^4(\alpha+1) \Delta \log_\alpha \Delta}{\alpha^{\delta-1}}\\
    &\leq\frac{3e^4 \Delta \log_\alpha \Delta}{\alpha^{\delta-2}}\sum_{i=1}^{\log_\alpha \Delta }\left(\frac{\alpha+1}{\alpha}\right)^i\\
    &\leq \frac{3e^4 \Delta \log^2_\alpha \Delta}{\alpha^{\delta-2}}\left(\frac{\alpha+1}{\alpha}\right)^{\log_\alpha \Delta}\\
     &= \frac{3e^4 \Delta \log^2_\alpha \Delta}{\alpha^{\delta-2}}\left(1 + \frac{1}{\alpha}\right)^{\log_\alpha \Delta}\\
      &= \frac{3e^4 \Delta \log^2_\alpha \Delta}{\alpha^{\delta-2}}\left(1 + \frac{1}{(\ln n)(\log_\alpha^2\Delta)}\right)^{\log_\alpha \Delta}\\
    &\leq \frac{3e^5 \Delta \log^2_\alpha \Delta}{\alpha^{\delta-2}}
\end{align*}
\endgroup
Hence the cost of the algorithm \HTM\ is $O\left(\frac{\Delta}{\alpha^{\delta-2}}\log^2 \Delta\right)$. However, note that \TM\ only pays positive cost on $X$ if for any optimal solution there is at least one request that such a solution must pay positive cost for in $X$. The reason for this is that if \TM($X$) moves the car
out of a vertex $v$ in $X$, then there are no in-commission parking spots left in $v$, and therefore every algorithm would have
to move the car out of $v$. Since every edge in $X$ has distance $\frac{\Delta}{\alpha^{\delta-1}}$, this gives us that \HTM\ must be $O(\alpha\log^2 \Delta) = O(\log n \log^3 \Delta)$ competitive on the metric $d_{G}$. 
\end{proof}
Together with Corollary \ref{HTreduc}, Lemma \ref{lem: HTMGood} gives us the following theorem: 
\begin{theorem}
  \HTM\ is $O(\log^6 \Delta\log^2 n)$-competitive for online metrical search instances. 
\end{theorem}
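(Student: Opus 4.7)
The plan is to obtain the theorem as essentially an immediate corollary of the two preceding results: Lemma \ref{lem: HTMGood}, which bounds the competitive ratio of \HTM\ under the grove metric $d_G$ by $O(\log n \log^3 \Delta)$, and Corollary \ref{HTreduc}, which lifts competitiveness under $d_G$ to competitiveness under the true tree metric $d_T$ at a multiplicative cost of $O(\alpha \log \Delta)$. Composing these, \HTM\ is $O(\log n \log^3 \Delta \cdot \alpha \log \Delta)$-competitive for online metrical search on $(T, d_T)$.

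To turn this into the claimed $O(\log^6 \Delta \log^2 n)$ bound, I would next substitute in the chosen parameter $\alpha = (\ln n)(\log_\alpha^2 \Delta)$. The key observation is that, since we may assume $\alpha \geq 2$, we have $\log_\alpha \Delta \leq \log_2 \Delta = O(\log \Delta)$, and hence $\alpha = O(\log n \cdot \log^2 \Delta)$. Plugging this in gives
\[
O(\log n \log^3 \Delta) \cdot O(\alpha \log \Delta) = O(\log n \log^3 \Delta) \cdot O(\log n \log^3 \Delta) = O(\log^2 n \log^6 \Delta),
\]
which matches the theorem statement.

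The only subtlety, and the one step I would check carefully, is the self-referential definition $\alpha = (\ln n)(\log_\alpha^2 \Delta)$: one must confirm that a valid $\alpha$ satisfying this exists and that the inequality $\log_\alpha \Delta \leq \log \Delta$ (i.e.\ $\alpha \geq 2$) holds under the implicit assumption that $n$ and $\Delta$ are at least constant. Once this is in place, the theorem follows immediately from the two cited results, so there is no additional technical work beyond the substitution above.
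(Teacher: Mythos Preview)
Your proposal is correct and matches the paper's approach exactly: the paper simply states that the theorem follows by combining Corollary~\ref{HTreduc} with Lemma~\ref{lem: HTMGood}, and your arithmetic spelling out the substitution of $\alpha = (\ln n)(\log_\alpha^2 \Delta) = O(\log n \log^2 \Delta)$ is the intended (and only) remaining step.
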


\begin{lemma}\label{LemAna2} \HTM\ is a monotone  algorithm for online metrical matching.
\end{lemma}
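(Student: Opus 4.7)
The plan is to reduce the monotonicity of $\HTM$ to the level-wise monotonicity of each internal $\TM$ invocation, guaranteed by Lemma~\ref{lemma:montonicitycondition}. Fix a request $r_t$, a sequence $R$ of prior internal random events of $\HTM$, and vertices $a, b, s$ with $b$ on the $T$-path from $a$ to $s$. We must prove
\[
\Prob{\HTM_R(r_t) = s \mid E_R,\, r_t = a} \;\le\; \Prob{\HTM_R(r_t) = s \mid E_R,\, r_t = b}.
\]

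First, I would unroll the recursive descent of $\HTM$ through the grove $G$. Let $L$ be the depth of the grove leaf whose associated vertex of $T$ holds the server $s$, and for each $i \in \{1,\ldots,L\}$ let $y_i^{\ast}$ denote the vertex of the depth-$i$ tree on the unique grove-path from the root of $G$ to that leaf. The event ``$\HTM_R(r_t) = s$'' is exactly the conjunction, over $i = 1,\ldots,L$, of the events ``the invocation $\TM(X_i)$ responds to $x_i$ by moving to $y_i^{\ast}$''. The fresh random bits consumed by distinct $\TM$ instances are independent across the grove, so conditioning on $E_R$ and the request's location factors the matching probability into
\[
\prod_{i=1}^{L} \Prob{\TM(X_i) \text{ routes } x_i \to y_i^{\ast} \mid E_R,\; y_j = y_j^{\ast}\ \forall j<i,\; r_t}.
\]

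Second, I would establish the geometric claim that, conditioning on correct routing at all prior levels (so that $X_i$ and $y_i^{\ast}$ are fixed) and writing $\xi_i(\cdot)$ for the level-$i$ request vertex $x_i$ defined in $\HTM$, the vertex $\xi_i(b)$ lies on the $X_i$-path from $\xi_i(a)$ to $y_i^{\ast}$ (possibly coinciding with one endpoint). Recall that $X_i$ is obtained from a sub-tree $T^{(i)} \subseteq T$ by contracting each depth-$i$ part to a single vertex, that $s \in T^{(i)}$, and that $\xi_i(\cdot)$ is the part containing the vertex of $T^{(i)}$ nearest the request. A case analysis suffices: if $a, b \in T^{(i)}$, the $T$-path from $a$ through $b$ to $s$ lies entirely in $T^{(i)}$ by convexity and contracts to the desired $X_i$-path; if $a \notin T^{(i)}$ but $b \in T^{(i)}$, the entry point of $a$ into $T^{(i)}$ lies on the $T$-path from $a$ to $b$ and the same contraction argument applies; if $a, b \notin T^{(i)}$, then since $T^{(i)}$ is a subtree both vertices share a common entry point along the path to $s$, giving $\xi_i(a) = \xi_i(b)$; the remaining combination $a \in T^{(i)}, b \notin T^{(i)}$ is ruled out by convexity of $T^{(i)}$ in $T$.

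Finally, Lemma~\ref{lemma:montonicitycondition} tells us each $\TM(X_i)$ is monotone on the tree $X_i$, and combined with the geometric claim this yields, at every level $i$,
\[
\Prob{\TM(X_i) \to y_i^{\ast} \mid \cdots,\, r_t = a} \;\le\; \Prob{\TM(X_i) \to y_i^{\ast} \mid \cdots,\, r_t = b}.
\]
Since every factor lies in $[0,1]$, multiplying these inequalities over $i = 1,\ldots,L$ produces the required monotonicity of $\HTM$. The main obstacle is the geometric case analysis establishing that $\xi_i(b)$ lies between $\xi_i(a)$ and $y_i^{\ast}$ in $X_i$; once that is in hand, the level-by-level composition and the appeal to $\TM$-monotonicity are essentially mechanical.
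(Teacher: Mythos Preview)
Your proposal is correct and follows essentially the same strategy as the paper's proof: decompose the matching event into level-by-level routing decisions in the grove, invoke the monotonicity of each $\TM(X_i)$ instance, and use that the randomness at deeper levels is independent of the entry vertex at the current level. The paper's proof is extremely terse (four sentences) and simply asserts both the level-wise monotonicity and the independence of deeper routing without justification; your write-up supplies the missing pieces, most notably the geometric case analysis showing that $\xi_i(b)$ lies on the $X_i$-path from $\xi_i(a)$ to $y_i^{\ast}$, which the paper does not address at all.
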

\begin{proof} 
Consider a tree $X_i$ and vertex $x_i \in X_i$ considered in the
\HTM\ algorithm. Then by the monotonicity of the algorithm
\TM\ the probability that a request would arrive
at a particular $y_i$ if the request arrived in a vertex $x_j$ on the path from
$x_i$ to $y_i$ has to be at least the probability 
that a request arriving at $x_i$ moves to $y_i$. 
And the routing in the $X_k$'s, $k > i$, is independent
of whether the request arrived at $x_i$ or $x_j$. 
\end{proof}

\section*{Acknowledgements} \vspace{-.1 in}We thank Anupam Gupta for his  guidance, throughout the research process, that was 
absolutely critical to obtaining these results.  
We thank Amos Fiat for introducing us to this posted-price research area, and for several helpful discussions.

\bibliography{arxiv}

\end{document}